\def \N {\mathbb{N}}
\def \R {\mathbb{R}}
\def \Gr {\mathbb{G}}
\def \ps {\Pi}
\def \vs {v_e}
\def \AA {A^*}
\def \RR {R^*}
\def \FA {\hat{A}}
\def \FR {\hat{R}}
\newtheorem{definition}{Definition}
\newtheorem{theorem}{Theorem}
\title{Strategic Evasion of Centrality Measures}
\author[1,2,*]{Marcin Waniek}
\author[2]{Jan Wo\'{z}nica}
\author[3]{Kai Zhou}
\author[4]{\\Yevgeniy Vorobeychik}
\author[1]{Talal Rahwan}
\author[2]{Tomasz Michalak}
\affil[1]{New York University Abu Dhabi}
\affil[2]{University of Warsaw}
\affil[3]{Hong Kong Polytechnic University}
\affil[4]{Washington University in St. Louis}
\affil[*]{To whom correspondence should be addressed: \texttt{mjwaniek@nyu.edu}}
\date{}
\begin{document}

\maketitle

\begin{abstract}
Among the most fundamental tools for social network analysis are centrality measures, which quantify the importance of every node in the network. This centrality analysis typically disregards the possibility that the network may have been deliberately manipulated to mislead the analysis. To solve this problem, a recent study attempted to understand how a member of a social network could rewire the connections therein to avoid being identified as a leader of that network. However, the study was based on the assumption that the network analyzer---the seeker---is oblivious to any evasion attempts by the evader. In this paper, we relax this assumption by modelling the seeker and evader as strategic players in a Bayesian Stackelberg game. In this context, we study the complexity of various optimization problems, and analyze the equilibria of the game under different assumptions, thereby drawing the first conclusions in the literature regarding which centralities the seeker should use to maximize the chances of detecting a strategic evader.
\end{abstract}

\section{Introduction}

Social network analysis tools have attracted significant attention in the literature \cite{getoor2005link,koschutzki2005centrality,fortunato2010community}. Such tools are typically used under an assumption that the members of the network are not strategic, i.e., they do not manipulate the topology of the network to their advantage. However, as argued by Michalak et al.~\cite{michalak2017strategic}, this assumption does not hold in many situations, ranging from privacy-savvy users of social media platforms \cite{luo2009facecloak}, through political activists \cite{youmans2012social}, to the members of criminal and terrorist organizations whose primary concern is to evade attention of security agencies \cite{kenney2013organisational}.

The first attempt to fill this gap in the literature was carried out by Waniek et al.~\cite{waniek2018hiding}, who considered how one could evade popular centrality measures, such as degree, closeness, and betweenness. More specifically, the authors studied how a member of the network---called the \emph{evader}---can rewire the network (by adding or removing edges) in order to optimally decrease the value of her centrality while maintaining her influence over other members of the network. The authors proved that, even without taking influence into consideration, the problem of decreasing the value of either closeness or betweenness centrality is NP-complete, while for the degree centrality the problem is in P.

Indeed, this study is the first in the literature to consider a strategic evader. Nevertheless, it has a number of limitations. Firstly, in their complexity analysis, the authors considered the problem of decreasing the \textit{value} of the evader's centrality, which is insufficient if the evader is concerned with decreasing her \textit{position} in the centrality-based ranking of all nodes, i.e., decreasing her centrality \emph{relative} to that of other nodes in the network. Secondly, the complexity analysis assumed that the evader is able to add and remove edges in the \textit{entire network}. This seems unrealistic in many settings such as social media platforms, where members are unable to view, let alone modify, any edge in the network. Finally, the authors assumed that the party using the social network analysis tools---the \textit{seeker}---is \textit{not strategic}, i.e., she is unaware of the evasion efforts made by the evader. While this assumption may hold in some settings, there are many others in which the seeker expects the evader to go to great lengths in order to mislead any analysis, as is the case with covert networks.

In this paper, we address all of the above limitations, and present the first analysis of evading centrality measures in settings where \textit{both parties act strategically}. We start by analyzing the complexity of decreasing the evader's \textit{position} in the centrality-based ranking, as opposed to decreasing the \textit{value} of the evader's centrality. More specifically, we require that the evader decreases her ranking by at least $d$ positions, and allow the evader to add or remove edges only \textit{locally}, i.e., in her immediate neighbourhood. We prove that this problem is NP-complete not only for closeness and betweenness centralities but also for degree centrality. Table~\ref{table:comparison} presents the main theoretical contributions of this paper.

We then model the interaction between the seeker and the evader as a \textit{Bayesian Stackelberg game} \cite{fudenberg1991game,paruchuri2007efficient,jain2008bayesian}, whereby the strategy set of the seeker consists of degree, closeness, betweenness, and eigenvector centralities, while the strategy set of the evader consists of all possible sets of changes in her network neighbourhood. Our extensive experimental analysis of this game draws the first conclusions in the literature regarding which centralities the seeker should use to maximize the chances of detecting a strategic evader.

\begin{table}[t]
\centering
\begin{tabular}{lccc}
\toprule
Centrality & Disguising & Hiding & Local Hiding \\
& Centrality~\cite{waniek2018hiding} & {Leader~\cite{waniek2017construction}} & (this paper) \\
\midrule
Degree & P & NP-complete & \textbf{NP-complete}\\
Closeness & NP-complete & NP-complete & \textbf{NP-complete}\\
Betweenness & NP-complete & unknown & \textbf{NP-complete} \\
\bottomrule
\end{tabular}
\caption{Comparing our complexity results to the literature.}
\label{table:comparison}
\end{table}

\section{Preliminaries}

Let $G = (V, E)\in \Gr$ denote a network, where $V$ is the set of $n$ nodes and $E \subseteq V \times V$ is the set of edges, and let $\Gr(V)$ denote the set of all possible networks whose set of nodes is $V$.
We denote by $(v,w)$ the edge between nodes $v$ and $w$.
We restrict our attention to undirected networks, and thus we do not discern between edges $(v,w)$ and $(w,v)$.
We also assume that networks do not contain self-loops, i.e., $\forall_{v \in V}(v,v) \notin E$. We denote by $N(v)$ the set of neighbours of $v$, i.e., $N(v) = \{w \in V : (v,w) \in E\}$.

A \textit{path} in  $(V,E)$ is an ordered sequence of nodes, $p=\langle v_1, \ldots, v_k\rangle$, in which every two consecutive nodes are connected by an edge in $E$.
The length of a path equals the number of edges therein. For any pair of nodes, $v,w \in V$, we denote by $\ps(v,w)$ the set of all shortest paths between these two nodes, and denote by $d(v,w)$ the \textit{distance} between the two, i.e., the length of a shortest path between them.

A \textit{centrality measure} is a function, $c \colon \Gr(V) \times V \rightarrow \R$, that expresses the importance of any given node in the network~\cite{bavelas1948mathematical}. We consider four fundamental centrality measures, namely degree, closeness, betweenness, and eigenvector.

\textit{Degree centrality}~\cite{shaw1954group} of node $v$ is proportional to its degree: $c_{degr}(G,v) = |N(v)|$.
\textit{Closeness centrality}~\cite{beauchamp1965improved} assigns the highest importance to the node with the shortest average distance to all other nodes: $c_{clos}(G,v) = \frac{1}{\sum_{w \in V}d(v,w)}$.
\textit{Betweenness centrality}~\cite{anthonisse1971rush,freeman1977set} of node $v$ is proportional to the percentage of shortest paths between every pair of other nodes that go through $v$: $c_{betw}(G,v) = \sum_{w\neq w'\neq v} \frac{|\{ p \in \ps(w,w') : v \in p \}|}{|\ps(w,w')|}.$
\textit{Eigenvector centrality}~\cite{bonacich1987power} evaluates each node based on the importance of its neighbours. Formally, $c_{eig}(G,v) = x_v$, where $x$ is the eigenvector corresponding to the largest eigenvalue of the adjacency matrix of $G$.

We consider two influence models:  \textit{independent cascade} and \textit{linear threshold}. Both models can be described in terms of spreading the ``activation'' of nodes across the network.
The process starts with an \textit{active} subset of nodes called the seed set. The activation then propagates through the network in discrete time steps, whereby nodes become influenced by their previously-activated neighbours.

Formally, let $I(t)$ denote the set of nodes that are active at round $t$, with $I(1)$ being the seed set. In the independent cascade model, an activation probability $p: V \times V \rightarrow \R$ is assigned to each pair of nodes. For every round $t > 1$ each node that became active in round $t-1$ has a single chance to activate each of her inactive neighbours $w$ with probability $p(v,w)$.
In our experiments we assume that for every pair of nodes, $v,w$, we have: $p(v,w)=0.15$. As for the linear threshold model, every node, $v$, is assigned a threshold, $t_v$, sampled from the set: $\{0, \ldots, |N(v)|\}$. Then, in every round $t > 1$, each inactive node becomes activated if $|I(t-1) \cap N(v)| \geq t_v$. In our experiments, the threshold of a node, $v$, is sampled from the set $\{1, \ldots, |N(v)|\}$ uniformly at random.
Notice that this variant is slightly different than the standard linear threshold model~\cite{kempe2003maximizing}, in which edges are assigned random weights. We use this variant to stay consistent with the previous literature on the topic~\cite{waniek2017construction,waniek2018hiding}.

In both models, the process ends when there are no new active nodes, i.e., when $I(t-1)=I(t)$. The influence of $v$ is then measured as the expected number of active nodes at the end of the process, when starting with $\{v\}$ as the seed set. Computing the exact influence requires exponential computations under both models, which is intractable even for relatively small networks. Thus, in our experiments we approximate the influence using Monte Carlo sampling, stopping the process when the improvement over the last $1,000$ iterations is smaller than $0.00001$. Note that even approximating the influence of a node becomes challenging when the number of nodes reaches thousands or more.

\section{Complexity of Local Hiding}
\label{sec:complexity}

We now formally define the main computational problem of our study, and analyze its computational complexity.

\begin{definition}[Local Hiding]
This problem is defined by a tuple $(G, \vs, b, c, \FA, \FR, d)$, where $G= (V,E)$ is a network, $\vs \in V$ is the evader, $b \in \N$ is a budget specifying the maximum number of edges that can be added or removed, $c \colon \Gr(V) \times V \rightarrow \R$ is a centrality measure, $\FA \subseteq N(\vs) \times N(\vs)$ is the set of edges allowed to be added, $\FR \subseteq \{\vs\} \times N(\vs)$ is the set of edges allowed to be removed, and $d \in \N$ is the safety margin.
The goal is to identify a set of edges to be added, $\AA \subseteq \FA$, and a set of edges to be removed, $\RR \subseteq \FR$, such that $|\AA| + |\RR| \leq b$ and the resulting network $(V, (E \cup \AA) \setminus \RR )$ contains at least $d$ nodes with centrality $c$ greater than that of the evader.
\end{definition}

As mentioned in the introduction, the two key differences between the above problem of \textit{Local Hiding} and the problem of \textit{Disguising Centrality} studied by Waniek et al.~\cite{waniek2018hiding} are as follows. Firstly, instead of seeking the optimal way of decreasing the value of the evader's centrality (which may not provide sufficient cover, especially if she is still ranked among the top nodes in the network), we want the position of the evader in the centrality-based ranking of all nodes to drop below $d$. Secondly, we assume that the evader is only capable of rewiring edges within her network neighbourhood---an assumption that holds in many realistic settings, e.g., the evader is able to disconnect herself from any of her friends, or even ask two of them to befriend one another, but is unable to connect to a complete stranger at will, or ask two strangers to befriend or unfriend one another. Notice that we do not allow to add any edges incident to the evader, as in case of most centrality measures such operation can only increase the ranking of the evader.

We also comment on the key differences between our Local Hiding problem and the problem of \textit{Hiding Leaders} studied by Waniek et al.~\cite{waniek2017construction} in the context of constructing covert networks. Firstly, the authors divide the nodes into leaders and the followers, where the changes in the network are allowed only among the followers. Secondly, they only allow edges to be \textit{added} among the followers, meaning that  no edge can be removed from the network.

\begin{theorem}
\label{thrm:degree-npcomplete}
The problem of Local Hiding is NP-complete given the degree centrality measure.
\end{theorem}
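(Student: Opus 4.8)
The statement has two halves, and I would dispatch the first quickly. Local Hiding under $c_{degr}$ is in NP: a certificate is a pair $(\AA,\RR)$ with $\AA\subseteq\FA$ and $\RR\subseteq\FR$, and verifying it only requires checking $|\AA|+|\RR|\le b$, building the modified network $(V,(E\cup\AA)\setminus\RR)$, computing every node's degree, and counting how many of them exceed $|N(\vs)|-|\RR|$ (the evader's residual degree), all in polynomial time. The substance of the theorem is therefore the hardness direction.

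For hardness I would reduce from $k$-Clique: given a graph $H=(V_H,E_H)$ and an integer $k$, decide whether $H$ contains a clique of size $k$. I build a Local Hiding instance as follows. The evader $\vs$ is joined to one node $w_i$ for each $i\in V_H$, so $|N(\vs)|=|V_H|=:D$; each $w_i$ is then padded with $D-k+1$ private pendant leaves, giving it initial degree $D-k+2$. I set $\FA=\{(w_i,w_j):\{i,j\}\in E_H\}$ (edges among neighbours of $\vs$, as required), $\FR=\emptyset$, budget $b=\binom{k}{2}$, and safety margin $d=k$. The purpose of these numeric choices is that an added edge raises a neighbour's degree by one, and a neighbour $w_i$ overtakes $\vs$---whose degree stays fixed at $D$ because no removals are permitted---precisely when it acquires $k-1$ new incident edges. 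The construction is plainly polynomial.

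It then remains to prove equivalence. The forward direction is direct: a $k$-clique $C$ in $H$ lets me add the $\binom{k}{2}$ allowed edges among $\{w_i:i\in C\}$, raising each of these $k$ nodes to degree $D+1>D$ within budget, so $d=k$ nodes beat the evader. The reverse direction carries the weight of the argument. Since $\FR=\emptyset$ the threshold is fixed at $D$, while leaves (degree $1$) and $\vs$ itself never exceed it, so any $\ge d=k$ nodes above threshold must all be among the $w_i$. Each such node has gained at least $k-1$ edges, so the added-degree summed over the surviving set $S$ is at least $k(k-1)$; but the total added-degree over all nodes is $2|\AA|\le 2b=k(k-1)$. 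For $k\ge 2$ these two bounds can hold simultaneously only if $|S|=k$, every added edge has both endpoints in $S$, and each vertex of $S$ is joined to the other $k-1$---that is, $S$ induces a $K_k$ built entirely from edges of $E_H$, hence a $k$-clique of $H$.

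The step I expect to be the main obstacle is exactly this reverse counting argument: I must choose the padding and the budget so that the uniform deficit $k-1$ and the bound $2|\AA|\le k(k-1)$ are matched with no slack, thereby excluding every wasteful configuration---larger-than-$k$ selections, edges spent on padding or on non-selected neighbours, and accidental boosts to nodes that were never meant to count. Forcing all of these inequalities to be simultaneously tight is the delicate part; once that is done, the identification of $S$ with a clique, and the remaining degenerate edge cases (e.g.\ $|V_H|<k$, handled by mapping to a trivial no-instance), are routine bookkeeping.
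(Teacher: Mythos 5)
Your proof is correct and follows essentially the same route as the paper's: a reduction from $k$-Clique in which each neighbour of $\vs$ is padded so that it trails the evader's degree by exactly $k-1$, the budget is set to $\binom{k}{2}$, the allowed additions $\FA$ are the edges of the clique instance, and a tight double-counting of added edge-endpoints ($2|\AA|\le k(k-1)$ versus a demand of at least $k-1$ per promoted node) forces any solution to be a $k$-clique. The only difference is cosmetic: the paper equalizes the candidates' degrees at $n$ using complement edges plus per-node pendants and raises the evader to degree $n+k-2$ via $k-2$ extra neighbours, whereas you pad each $w_i$ with pendant leaves directly; both constructions produce the same uniform deficit and support the identical counting argument.
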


\begin{proof}
The problem is trivially in NP, since after the addition of a given set of edges $\AA$ and the removal of a given set of edges $\RR$ it is possible to compute the degree centrality of all nodes in polynomial time. Next, we prove that the problem is NP-hard.
To this end, we give a reduction from the NP-complete problem of Finding $k$-Clique, where the goal is to determine whether there exist $k$ nodes in $G$ that form a clique.
Given an instance of the problem of Finding $k$-Clique, defined by $k \in \N$ and a network $G=(V,E)$, let us construct a network, $H=(V',E')$, as follows:

\begin{itemize}
\item $V' = \{\vs\} \cup V \cup \bigcup_{v_i \in V} \bigcup_{j=1}^{|N(v_i)|} \{x_{i,j}\} \cup \bigcup_{i=1}^{k-2} \{z_i\}$,
\item $E' =  \bigcup_{v_i \in V'} \{(v_i,\vs) \} \cup \bigcup_{x_{i,j} \in V'} \{(v_i,x_{i,j})\} \cup \bigcup_{z_i \in V'} \{(z_i,\vs)\} \cup \bigcup_{(v_i,v_j) \notin E} \{(v_i,v_j)\}$.
\end{itemize}

\begin{figure}[t]
\centering
\includegraphics[width=.6\linewidth]{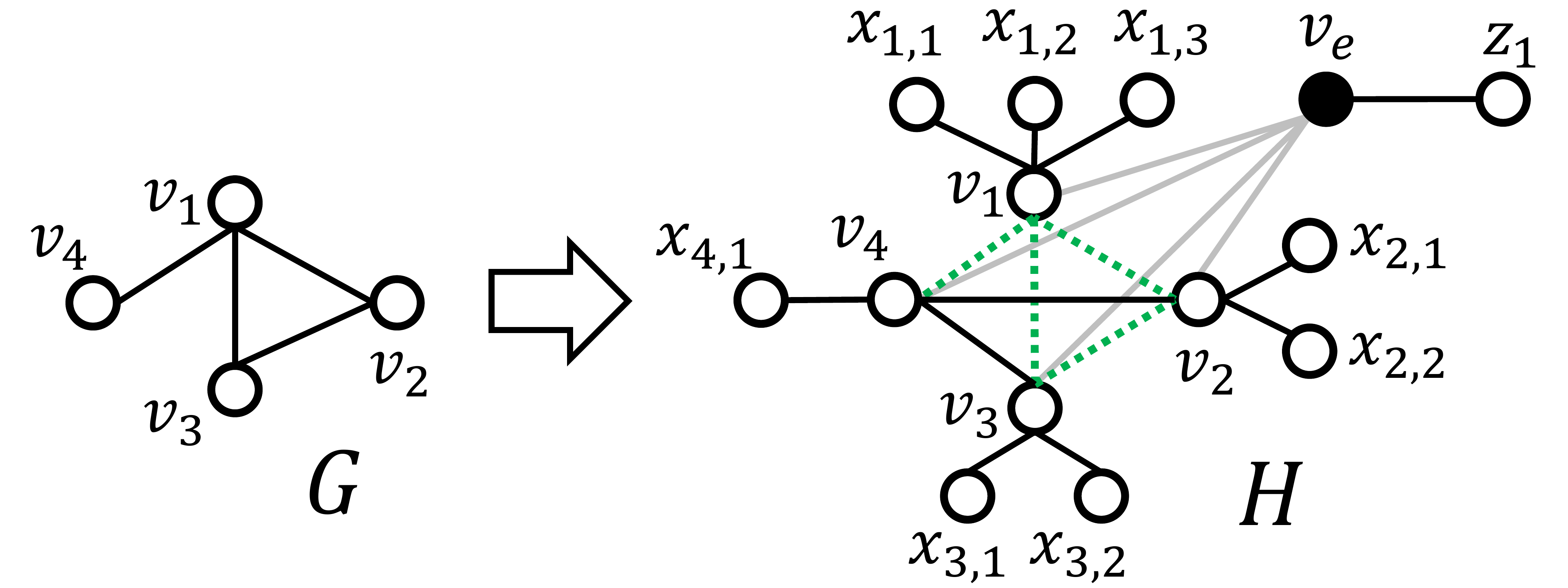}
\caption{Network used in the proof of Theorem~\ref{thrm:degree-npcomplete} for $k=3$.}
\label{fig:degree-nphard}
\end{figure}

An example of such a network $H$ is illustrated in Figure~\ref{fig:degree-nphard}. Now, consider the instance $(H,\vs,b,c,d,\FA,\FR)$ of the problem of Local Hiding where $H=(V',E')$ is the network we just constructed, $\vs$ is the evader, $b=\frac{k(k-1)}{2}$, $c$ is the degree centrality measure, $d = k$, $\FA = E$, and $\FR = \emptyset$.

From the definition of the problem we know that the edges to be added to $H$ must be chosen from $E$, i.e., from the network in the Finding $k$-Clique problem.
Out of those edges, we need to choose a subset, $\AA \subseteq E$, as a solution to the Local Hiding problem. 
In what follows, we will show that a solution to the above instance of the Local Hiding problem in $H$ corresponds to a solution to the problem of Finding $k$-Clique in $G$.

First, note that $\vs$ has the highest degree in $H$, which is $n + k -2$.
Thus, in order for $\AA$ to be a solution to the Local Hiding problem, the addition of $\AA$ to $H$ must increase the degree of at least $k$ nodes in $V$ such that each of them has a degree of at least $n + k - 1$ (note that the addition of $\AA$ only increases the degrees of nodes in $V$, since we already established that $\AA \subseteq E$).
Now since in $H$ the degree of every node $v_i$ equals $n$ (because of the way $H$ is constructed), then in order to increase the degree of $k$ such nodes to $n + k-1$, each of them must be an end of at least $k-1$ edges in $\AA$.
But since the budget in our problem instance is $\frac{k(k-1)}{2}$, then the only possible choice of $\AA$ is the one that increases the degree of exactly $k$ nodes in $V$ by exactly $k-1$.
If such a choice of $\AA$ is available, then surely those $k$ nodes form a clique in $G$, since all edges in $\AA$ are taken from $G$.
\end{proof}

\begin{theorem}
\label{thrm:closeness-npcomplete}
The problem of Local Hiding is NP-complete given the closeness centrality measure.
\end{theorem}

\begin{proof}
The problem is trivially in NP, since after the addition of a given $\AA$, and the removal of a given $\RR$, it is possible to compute the closeness centrality of all nodes in polynomial time.
Next, we prove that the problem is NP-hard.
To this end, we propose a reduction from the NP-complete $3$-Set Cover problem.
Let $U=\{u_1, \ldots, u_l\}$ be the universe, and let $S = \{S_1, \ldots, S_m\}$ be the set of subsets of the universe, where for every $S_i$ we have $|S_i|=3$.
The goal is then to determine whether there exist $k$ elements of $S$ the union of which equals $U$.
Given an instance of the $3$-Set Cover problem, let us construct a network, $G=(V,E)$, as follows:

\begin{itemize}
\item $V = \{\vs, t\} \cup \bigcup_{S_i \in S} \{S_i\} \cup \bigcup_{u_i \in U} \{u_i,w_i\} \cup \bigcup_{i=1}^{l+m-k+1} \{x_i\}$,
\item $E = \{(t,\vs)\} \cup \bigcup_{x_i \in V} \{(x_i,t)\} \cup \bigcup_{w_i \in V} \{(w_i,\vs),(w_i,u_i)\} \cup \bigcup_{S_i \in V} \{(S_i,\vs)\} \cup \bigcup_{u_j \in S_i} \{(S_i,u_j)\}$.
\end{itemize}

\begin{figure}[t]
\centering
\includegraphics[width=.5\linewidth]{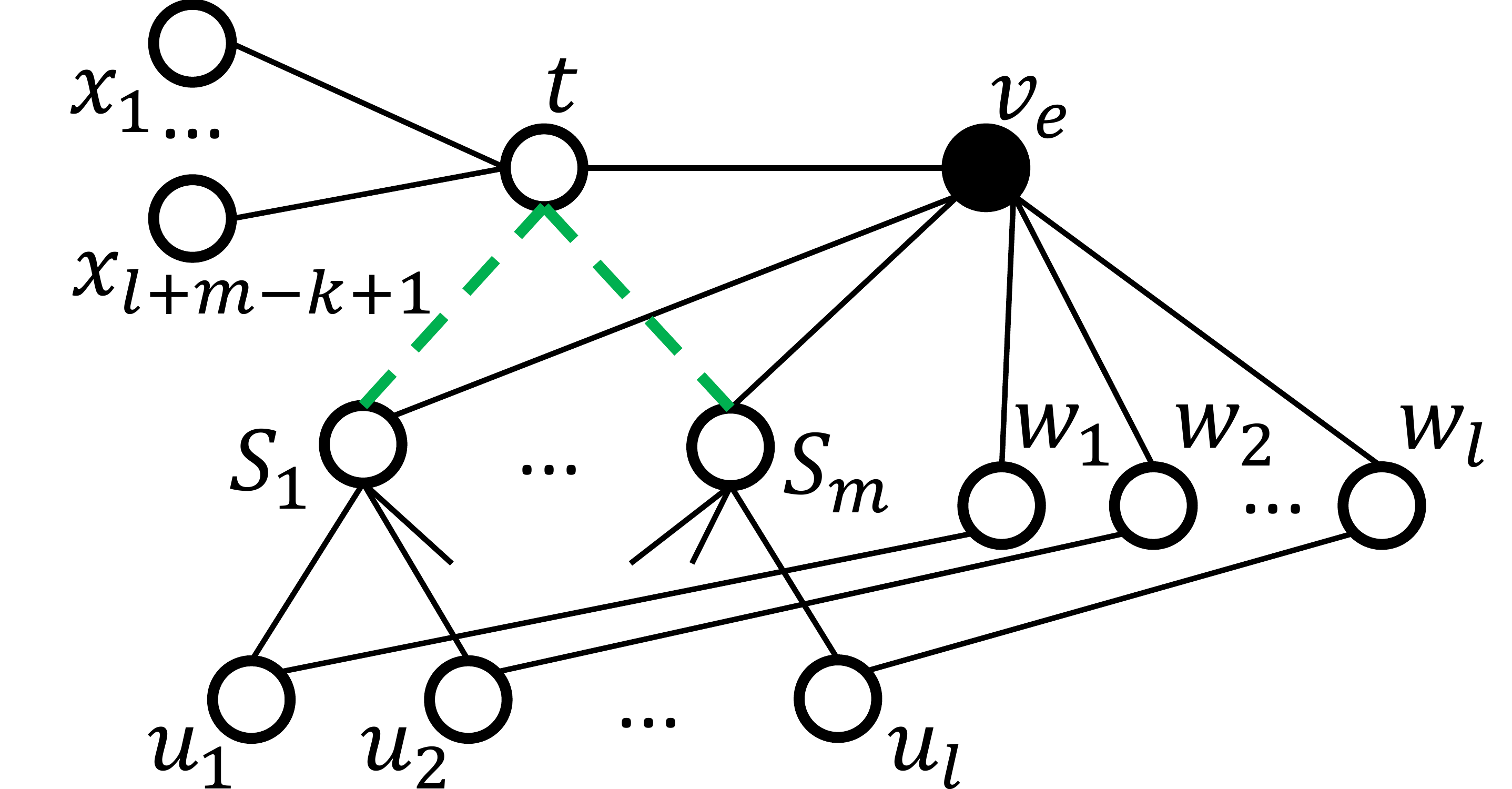}
\caption{Network used in the proof of Theorem~\ref{thrm:closeness-npcomplete}.}
\label{fig:closeness-nphard}
\end{figure}

An example of the resulting network, $G$, is illustrated in Figure~\ref{fig:closeness-nphard}. Now, consider the following instance of the problem of Local Hiding, $(G,\vs,b,c,\FA,\FR,d)$, where $G$ is the network we just constructed, $\vs$ is the evader, $b=k$ (where $k$ is the parameter of the $3$-Set Cover problem), $c$ is the closeness centrality measure, $d = 1$, $\FA = \{(t,S_i): S_i \in S\}$, and $\FR = \emptyset$.

From the definition of the problem, we see that the only edges that can be added to the graph are those between $t$ and the members of $S$.
Notice that any such choice of $\AA$ corresponds to selecting a subset of $|\AA|$ elements of $S$ in the $3$-Set Cover problem.
In what follows, we will show that a solution to the above instance of Local Hiding corresponds to a solution to the $3$-Set Cover problem.

First, we will show that for every $v \in V \setminus \{t,\vs\}$ and every $\AA \subseteq \FA$ we either have $c(G',v) < c(G',t)$ or have $c(G',v) < c(G',\vs)$, where $G' = (V,E \cup \AA)$.
To this end, let $D(G',v)$ denote the sum of distances from $v$ to all other nodes, i.e., $D(G',v) = \sum_{w \in V \setminus \{v\}} d(v,w)$.
Note that $D(G',v)= \frac{n-1}{c(G',v)}$.
We will show that the following holds:
$$
\forall_{v \in V \setminus \{t,\vs\}} \forall_{\AA \subseteq \FA} \left(D(G',v) > D(G',t) \lor D(G',v) > D(G',\vs)\right).
$$
Let $d_t$ denote $\sum_{u_i \in U} d(t,u_i) + \sum_{S_i \in S} d(t,S_i)$.
Notice also that $k \leq m$.
Next, we compute $D(G',v)$ for the different types of node $v$:

\begin{itemize}
\item $D(G',\vs) = 5l + 3m - 2k + 3$;
\item $D(G',t) = 3l + m - k + 2 + d_t$;
\item $D(G',x_i) = 6l + 3m - 2k + 3 + d_t > D(G',t)$;
\item $D(G',w_i) = 8l + 5m - 3k + 2 > D(G',\vs)$;
\item $D(G',u_i) \geq 9l + 4m - 3k + 2 > D(G',\vs)$ as $\sum_{S_j \in S}d(u_i,S_j) \geq m$;
\item $D(G',S_i)\! \geq\! 7l + 4m - 2k - 4\! >\! D(G',\vs)$ as $d(S_i,\vs)\! \geq\! 1$.
\end{itemize}


Based on this, either $t$ or $\vs$ has the highest closeness centrality, therefore $\AA \subseteq \FA$ is a solution to the problem of Local Hiding if and only if $D(G',t) < D(G',\vs)$.
This is the case when $d_t < 2l + 2m - k + 1.$
Let $U_A=\{u_i \in U: \exists_{S_j \in S} u_i \in S_j \land (t,S_j) \in \AA\}$.
We have that $d_t = |\AA| + 2 (m - |\AA|) + 2 |U_A| + 3 (l-|U_A|)$ which gives us $d_t = 3l - |U_A| + 2m -|\AA|$.
Since by definition $|U_A| \leq l$ and $|\AA| \leq k$, it is possible that $d_t < 2l + 2m - k + 1$ only when $|U_A| = l$ and $|\AA| = k$, i.e., $\forall_{u_i \in U} \exists_{S_j \in S} u_i \in S_j \land (t,S_j) \in \AA$.
This solution to the problem of Local Hiding corresponds to a solution to the given instance of the $3$-Set Cover problem, which concludes the proof.
\end{proof}

\begin{theorem}
\label{thrm:betweenness-npcomplete}
The problem of Local Hiding is NP-complete given the betweenness centrality measure.
\end{theorem}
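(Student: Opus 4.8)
The plan is to follow the same two-step template as the previous two theorems. Membership in NP is again immediate: once a candidate set of additions $\AA \subseteq \FA$ and removals $\RR \subseteq \FR$ is fixed, the resulting network is fully determined, and the betweenness of every node can be computed in polynomial time; we can therefore verify in polynomial time whether at least $d$ nodes have betweenness exceeding that of $\vs$. The substance of the proof is NP-hardness, which I would again establish by a reduction from the $3$-Set Cover problem, reusing as much of the closeness gadget of Theorem~\ref{thrm:closeness-npcomplete} as possible. In fact, I would keep essentially the same topology: the evader $\vs$ adjacent to a central node $t$, to every set-node $S_i$, and (through intermediaries $w_i$) to every element-node $u_i$; a bundle of pendant nodes $x_i$ hung off $t$; each $S_i$ joined to the three element-nodes it contains; and $\FA = \{(t,S_i) : S_i \in S\}$, $\FR = \emptyset$, $b = k$. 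Note that $t, S_i \in N(\vs)$, so these additions respect the locality constraint.

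The reason this topology is promising for betweenness is a clean shortest-path observation. Before any edge is added, the shortest path from each pendant $x_i$ to each element-node $u_j$ has length $4$ and \emph{must} pass through $\vs$ (the only non-pendant neighbour of $t$ is $\vs$, and $v_e$ is then at distance $2$ from every $u_j$). Hence $\vs$ accrues a large amount of betweenness, proportional to the number of pendants times the number of element-nodes. Adding the edge $(t,S_i)$ --- i.e.\ ``selecting'' the set $S_i$ --- creates, for each $u_j \in S_i$, a length-$3$ detour $x_i - t - S_i - u_j$ that bypasses $\vs$ entirely, thereby transferring the corresponding betweenness off $\vs$ and onto $t$ (and $S_i$). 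Crucially, an \emph{uncovered} element-node $u_j$ retains all of its pendant paths through $\vs$, so the transfer is monotone in the set of covered elements.

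For correctness I would set $d = 1$ and tune the number of pendant nodes so that $c_{betw}(G',t)$ exceeds $c_{betw}(G',\vs)$ \emph{if and only if} every element $u_j$ is covered by some selected set. Since the budget is $b = k$, this can occur only when $k$ sets cover $U$, which is exactly the condition for the $3$-Set Cover instance to be a yes-instance. As in the closeness proof, I would also have to verify the two complementary bounds: that under any selection $\AA$ no node other than $t$ can overtake $\vs$ (so that reducing the evader's rank by one position is equivalent to $t$ surpassing $\vs$), and that any partial cover leaves $\vs$ strictly on top. These two facts together yield the biconditional linking Local Hiding to $3$-Set Cover.

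The main obstacle, relative to degree and closeness, is that betweenness is a \emph{fractional} count over all shortest paths, so the bookkeeping is considerably more delicate. When several selected sets share an element $u_j$, the node $u_j$ acquires multiple equal-length detours around $\vs$, splitting the pendant contributions among the corresponding set-nodes; moreover the $w_j$-routes and $S_i$-routes can create ties among shortest paths that must be accounted for exactly. The delicate part is to compute the exact fractional contributions to $c_{betw}(\vs)$ and $c_{betw}(t)$ as a function of the number of covered elements, choose the pendant count so that the crossover occurs \emph{precisely} at full coverage and nowhere earlier, and simultaneously rule out that any set-node $S_i$ or element-node accidentally climbs above $\vs$ under a partial selection. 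Pinning down these exact betweenness values and proving that the crossover is monotone and attained only at full coverage is where the real work of the proof lies.
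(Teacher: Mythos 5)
Your skeleton matches the paper's: membership in NP is argued identically, the hardness reduction is from $3$-Set Cover with $\FA=\{(t,S_i):S_i\in S\}$, $\FR=\emptyset$, $b=k$, $d=1$, and the engine is the same observation that ``selecting'' $S_i$ by adding $(t,S_i)$ reroutes pendant-to-element shortest paths around $\vs$ and onto $t$, so that $t$ overtakes $\vs$ exactly at full coverage. But the proposal stops where the proof begins. Everything you defer in your last paragraph --- the exact fractional accounting of $B(t)$ and $B(\vs)$, the verification that no node other than $t$ can overtake $\vs$, and the proof that the crossover is attained at full coverage and nowhere earlier --- \emph{is} the proof; in the paper it occupies essentially the entire argument and, crucially, it forces a gadget that differs materially from the closeness gadget you propose to reuse.

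The concrete obstacle is your tuning step. With the closeness gadget, the pendants $x_i$ attached to $t$ form an independent set, so $t$ controls all shortest paths between pairs of pendants and accrues a non-transferable $\binom{|X|}{2}$ term that grows quadratically in the tuning parameter, while the transferable mass (pendant-to-element paths that can be pulled off $\vs$) grows only linearly. Making $|X|$ large therefore lets $t$ beat $\vs$ even with $\AA=\emptyset$, turning the instance into a trivial yes-instance; making $|X|$ small leaves the transferable mass unable to dominate the $O(l^2+m^2+lm)$ noise from pairs within $S\cup U\cup W$. There is no guarantee that any single pendant count threads this needle, so ``tune the pendant count'' is not a step one can wave at. The paper resolves this by redesigning the gadget: the pendant set $X$ (of size $\alpha=m^2l(m+l+2)$) is made a \emph{clique} to kill the quadratic term, a second clique $Y$ of size $\beta=m^2l(k+l+2)$ is attached to $\vs$ as a counterweight, and the per-element nodes $w_i$ are replaced by two nodes $w_1,w_2$ to control ties among shortest paths. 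The sizes are chosen so that the deficit $(\alpha-\beta)(m+l+3)$ is offset by the $(m-k)$ unselected sets each contributing roughly $\alpha+1$ to $B(\vs)$, and each uncovered element contributes a further $\alpha+1$ that tips the sign of $B(\vs)-B(t)$. Your proposal correctly identifies the strategy and correctly names the hard part, but as written it has a genuine gap there, and the specific gadget you commit to would likely need to be rebuilt along these lines before the computation could close.
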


\begin{proof}
The problem is trivially in NP, since after the addition of a given set of edges $\AA$, and the removal of a given set of edges $\RR$, it is possible to compute the betweenness centrality of all nodes in polynomial time.

Next, we prove that the problem is NP-hard.
To this end, we propose a reduction from the NP-complete $3$-Set Cover problem.
Let $U=\{u_1, \ldots, u_l\}$ be the universe, and let $S = \{S_1, \ldots, S_m\}$ be the set of subsets of the universe, where for every $S_i$ we have $|S_i|=3$.
The goal is then to determine whether there exist $k$ elements of $S$ the union of which equals $U$.
Given an instance of the $3$-Set Cover problem, let us construct a network $G=(V,E)$ as follows:

\begin{itemize}
\item $V = \{\vs,t,w_1,w_2 \} \cup S \cup U \cup \bigcup_{i=1}^{\alpha} \{x_i\} \cup \bigcup_{i=1}^{\beta} \{y_i\}$, where $\alpha = m^2l(m+l+2)$ and $\beta = m^2l(k+l+2)$,
\item $E = \{(t,\vs),(w_1,w_2)\} \cup \bigcup_{x_i \in V} \{(x_i,t)\} \cup \bigcup_{y_i \in V} \{(y_i,\vs)\} \cup \bigcup_{S_i \in V} \{(S_i,\vs),(S_i,w_1)\} \cup \bigcup_{u_j \in S_i}\{(S_i,u_j)\} \cup \bigcup_{u_i \in V} \{(u_i,w_2)\} \cup \bigcup_{x_i,x_j \in V} \{(x_i,x_j)\} \cup \bigcup_{y_i,y_j \in V} \{(y_i,y_j)\}$.
\end{itemize}

\begin{figure}[t]
\centering
\includegraphics[width=.5\linewidth]{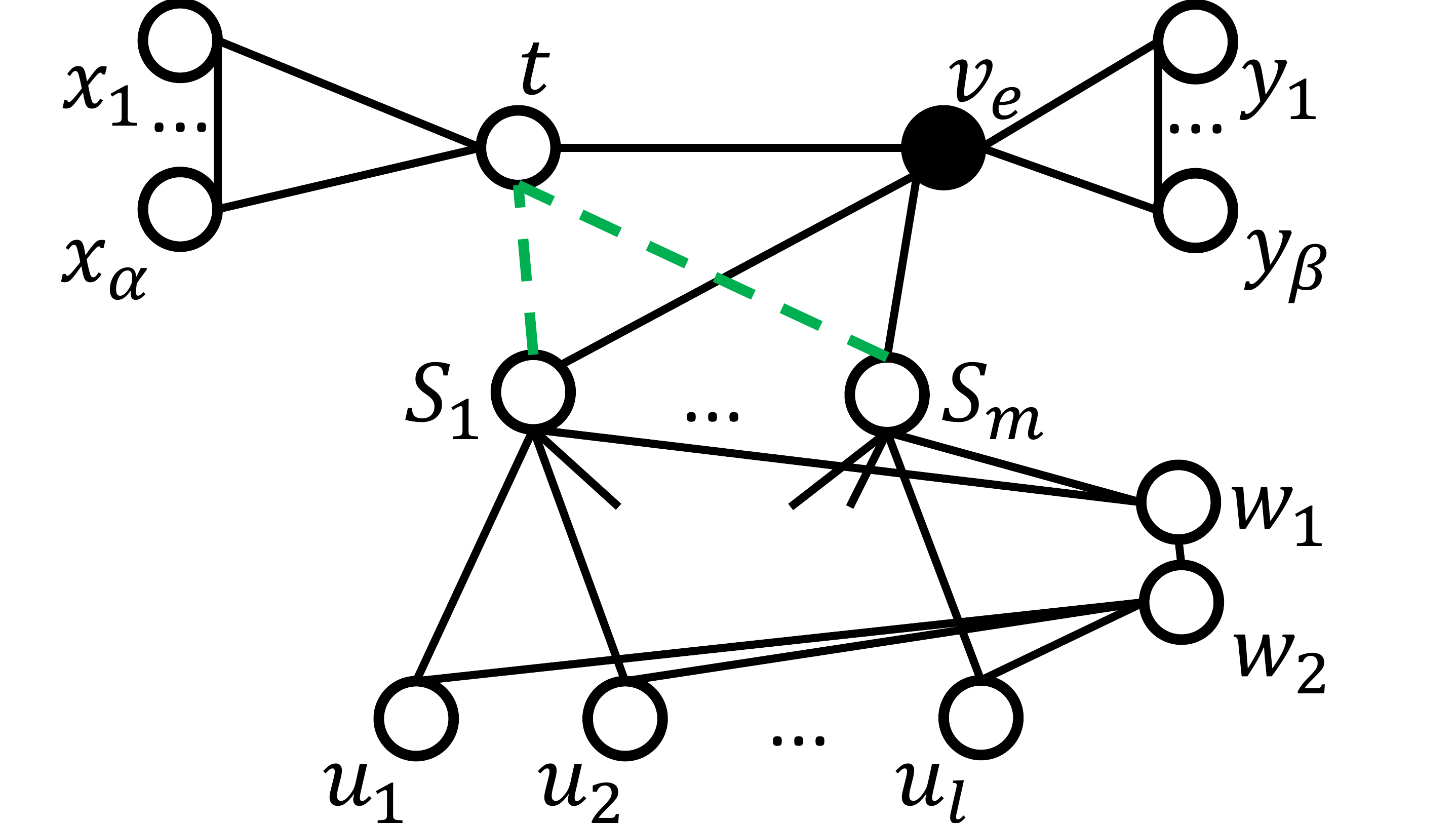}
\caption{The network used in the~proof of Theorem~\ref{thrm:betweenness-npcomplete}.}
\label{fig:betweenness-nphard}
\end{figure}

An example of the resulting network is illustrated in Figure~\ref{fig:betweenness-nphard}. Consider the instance $(G,\vs,b,c,\FA,\FR,d)$ of the problem of Local Hiding, where $G$ is the network we just constructed, $\vs$ is the evader, $b=k$ (where $k$ is the parameter of the $3$-Set Cover problem), $c$ is the betweenness centrality measure, $d = 1$, $\FA = \{(t,S_i): S_i \in S\}$, and $\FR = \emptyset$.

From the definition of the problem, one can see that the only edges that can be added to the graph are those between $t$ and the members of $S$.
Notice that any such choice of $\AA$ corresponds to selecting a subset of $|\AA|$ elements of $S$ in the $3$-Set Cover problem.
In what follows, we will show that a solution to the above instance of Local Hiding corresponds to a solution to the $3$-Set Cover problem.

First, we will show that for every node $v \in V \setminus \{t,\vs\}$ and every $\AA \subseteq \FA$ we have $c(G',v) < c(G',t)$, where $G' = (V,E \cup \AA)$.
To this end, let $B(v)$ denote the sum of percentages of shortest paths controlled by $v$ between pairs of other nodes, i.e., $B(v) = \sum_{w,w' \in V \setminus \{v\}} \frac {|\{ p \in \ps(w,w') : v \in p \}|} {|\ps(w,w')|}$.
Note that $B(v) = \frac{(n-1)(n-2)}{2} c(G',v)$
Next, we will show that the following holds:
$$
\forall_{v \in V \setminus \{t,\vs\}} \forall_{\AA \subseteq \FA} B(v) < B(t).
$$

Since $t$ controls all shortest paths between the nodes in $X$ and those in $\{\vs, w_1, w_2\} \cup Y \cup S \cup U$, we have:
$$
B(t) \geq \alpha (\beta + m + l + 3) \geq m^4 l^3 (m+l+2) + m^2l(m+l+2)^2
$$
Moreover, since $\alpha = m^2l(m+l+2)$, $\beta = m^2 l(k+l+2)$, and $k < m$, then $\alpha + \beta < 2m^2l(m+l+2)$.

For nodes other than $t$ we have:

\begin{itemize}
\item $B(x_i) = B(y_i) = 0 < B(t)$, since the nodes in $X \cup Y$ do not control any shortest paths.
\item $B(w_1) \leq (\alpha + \beta + m + 2) + \frac{m(m-1)}{2} + ml \leq  2m^2l(m+l+2) + m^2 + m + ml < (2m^2l + m)(m+l+2)< B(t)$, because $w_1$ controls some shortest paths between $w_2$ and nodes in $\{t,\vs\} \cup X \cup Y \cup S$ (there are $\alpha + \beta + m + 2$ such pairs), some shortest paths between pairs of nodes in $S$ (there are at most $\frac{m(m-1)}{2}$ such pairs), and some shortest paths between nodes in $U$ and nodes in $S$ (there are at most $ml$ such pairs).
\item $B(w_2) \leq \frac{l(l-1)}{2} + l + ml < \frac{l^2+l}{2} + ml < B(t)$, because $w_2$ controls some shortest paths between pairs of nodes in $U$ (there are at most $\frac{l(l-1)}{2}$ such pairs), some shortest paths between nodes in $U$ and $w_1$ (there are at most $l$ such pairs), and some shortest paths between nodes in $U$ and nodes in $S$ (there are at most $ml$ such pairs).
\item $B(u_i) \leq (\alpha + \beta + m + 2) + \frac{m(m-1)}{2} < B(t)$, because $u_i$ controls 
some shortest paths between $w_2$ and nodes in $\{t,\vs\} \cup X \cup Y \cup S$ (there are $\alpha + \beta + m + 2$ such pairs), and some shortest paths between pairs of nodes in $S$ (there are at most $\frac{m(m-1)}{2}$ such pairs).
\item $B(S_i) \leq 3(\alpha + \beta + l + m + 2) + l + 2(\alpha + \beta + 2) \leq 5(\alpha + \beta + l + m + 2) \leq (10 m^2l + 5)(m + l + 2) < B(t)$, because $S_i$ controls some shortest paths between the nodes in $U$ that are connected to $S_i$ and the nodes in $\{t,\vs\} \cup X \cup Y \cup S \cup U$ (there are at most $3(\alpha + \beta + l + m + 2)$ such pairs), some shortest paths between $w_1$ and the nodes in $U$ (there are at most $l$ such pairs), and some of the  shortest paths between nodes in $\{w_1,w_2\}$ and nodes in $\{t,\vs\} \cup X \cup Y$ (there are at most $2(\alpha + \beta + 2)$ such pairs).
\end{itemize}

Therefore, either $t$ or $\vs$ has the highest betweenness centrality.
Hence, $\AA \subseteq \FA$ is a solution to the problem of Local Hiding if and only if $B(t) > B(\vs)$.
We now compute the values of $B(t)$ and $B(\vs)$.
We have that:
$$
B(t) = \alpha(\beta + m + l + 3) + \sum_{\substack{S_i,S_j \in S :\\ (t,S_i) \in E \land (t,S_j) \in E}}\frac{1}{|N(S_i,S_j)|} + \sum_{S_i \in N(t)}\sum_{u_j \in U \setminus N(S_i)}\frac{|N(t,u_j)|}{|N(t,u_j)|+|N(\vs,u_j)|+1}
$$
as $t$ controls all shortest paths between every pair $(x_i,v)$ where $x_i\in X$ and $v\in V\setminus (X \cup \{t\})$ (there are $\alpha(\beta + m + l + 3)$ such pairs), one shortest path between each pair of nodes in $N(t)\cap S$, and the shortest paths between every pair $(v,w)$ where $v\in N(t)\cap S$ and $w\in U:N(t)\cap N(w)\neq \emptyset$ (other paths run through $\vs$ and nodes in $S$, or through $w_1$ and $w_2$). On the other hand, we have that:
$$
\begin{aligned}
B(\vs) = \beta(\alpha + m + l + 3) + \sum_{S_i,S_j \in S}\frac{1}{|N(S_i,S_j)|} & + \sum_{S_i \notin N(t)}(\alpha+1) + \sum_{u_i \in U : N(t,u_i) = \emptyset}(\alpha+1) \\
& + \sum_{S_i \in S}\sum_{u_j \in U \setminus N(S_i)}\frac{|N(\vs,u_j)|}{|N(t,u_j)|+|N(\vs,u_j)|+1}
\end{aligned}
$$
as $\vs$ controls all shortest paths between nodes in $Y$ and all other nodes (there are $\beta(\alpha + m + l + 3)$ such pairs), one shortest path between each pair of nodes in $S$, paths between nodes in $S$ and nodes in $U$, and all shortest paths between $\{t\} \cup X$ and nodes $\{S_i \in S : S_i \notin N(t)\} \cup \{u_i \in U : N(t,u_i) = \emptyset\}$. Thus, we have:
$$
B(\vs)-B(t) = (\beta-\alpha) (m+l+3)+\sum_{\substack{S_i,S_j \in S :\\ (t,S_i) \notin E \lor (t,S_j) \notin E}} \frac{1}{|N(S_i,S_j)|}+\Delta SU+\sum_{S_i \notin N(t)}(\alpha+1)+\sum_{u_i \in U : N(t,u_i) = \emptyset}(\alpha+1)
$$
where $0 < \Delta SU \leq ml$.

Note that $B(\vs)$ decreases with $\left|\AA\right|$ and also decreases with $\left|\{u_i \in U:\exists_{S_j \in N(t)} u_i \in S_j\}\right|$. Next, we prove that:
\begin{enumerate}[leftmargin={1cm}]
\item \label{pt1:leaders-betweenness-npcomplete} If $|\AA| = k$ and for every $u_i \in U$ there exists $S_j \in N(t)$ such that $u_i \in S_j$, then $B(\vs) < B(t)$;
\item \label{pt2:leaders-betweenness-npcomplete} If $|\AA| = k$ and there exists $u_i \in U$ such that for every $S_j \in N(t)$ we have $u_i \notin S_j$, then $B(\vs) > B(t)$.
\end{enumerate}

Regarding point~(a), we have:
$$
B(\vs)-B(t) = (\beta-\alpha) (m + l + 3) + (m - k)(\alpha + 1) +\sum_{\substack{S_i,S_j \in S :\\ (t,S_i) \notin E \lor (t,S_j) \notin E}}\frac{1}{|N(S_i,S_j)|} + \Delta SU.
$$
Now since $|\{S_i,S_j \in S : (t,S_i) \notin E \lor (t,S_j) \notin E\}|=\frac{m(m-1)-k(k-1)}{2}=\frac{(m-k)(m+k-1)}{2}$, and $|N(S_i,S_j)| \geq 2$, then we have:
$$
B(\vs)-B(t) \leq (\beta - \alpha) (m + l + 3) +(m - k)\left(\alpha + 1 + \frac{\Delta SU}{m-k} + \frac{m+k-1}{4}\right).
$$
By substituting the values of $\alpha$ and $\beta$, and observing that $\Delta SU < ml$ and $k < m$, we get:
$$
B(\vs) - B(t) < m^2l(k-m) (m+l+3) +(m-k)(m^2l(m+l+2)+1+ml+2m-1),
$$
which gives us:
$$
B(\vs) - B(t) < (k-m)m^2l + (m-k)(ml+2m) = (k-m)m(ml-l-2) < 0.
$$
Hence, if $|\AA| = k$ and for every $u_i \in U$ there exists $S_j \in N(t)$ such that $u_i \in S_j$, then $B(\vs) < B(t)$.

Regarding point~(b), since there exists $u_i \in U$ such that for every $S_j \in N(t)$ we have $u_i \notin S_j$, then:
$$
B(\vs)-B(t) \geq (\beta-\alpha) (m+l+3)+(m-k)(\alpha+1)+(\alpha+1) + \sum_{\substack{S_i,S_j \in S :\\ (t,S_i) \notin E \lor (t,S_j) \notin E}}\frac{1}{|N(S_i,S_j)|} + \Delta SU.
$$
Since $\sum_{\substack{S_i,S_j \in S :\\ (t,S_i) \notin E \lor (t,S_j) \notin E}}\frac{1}{|N(S_i,S_j)|} > 0$ and $\Delta SU > 0$, then we have:
$$
B(\vs) - B(t) > (\beta-\alpha) (m + l + 3)+(m-k+1)(\alpha+1).
$$
By substituting the values of $\alpha$ and $\beta$ we get:
$$
B(\vs)-B(t)>m^2l(k-m)(m + l + 3)+(m-k+1) (m^2l(m+l+2)+1)
$$
which gives us:
$$
B(\vs)-B(t)>m^2l(k-m)+m^2l(m+l+2)=m^2l(k+l+2) > 0
$$
Hence, if $|\AA| = k$ and there exists $u_i \in U$ such that for every $S_j \in N(t)$ we have $u_i \notin S_j$, then $B(\vs) > B(t)$.

Thus, the solution to the problem of Local Hiding corresponds to a solution to the given instance of the $3$-Set Cover problem, which concludes the proof.
\end{proof}

\section{The Seeker-Evader Game}
\label{sec:seeker-evader-game}

\textbf{Player strategies:}
We model the problem of strategically hiding in a network as a game between two players: the \emph{evader} and the \emph{seeker}. In particular, the seeker analyzes the network using a set of strategies, $T_s$, consisting of the fundamental centrality measures: degree, closeness, betweenness, and eigenvector. On the other hand, the goal of the evader is to decrease her position in the centrality-based ranking of all nodes, while maintaining her influence within the network (notice that the theoretical problems presented in Section~\ref{sec:complexity} are focused on providing safety to the evader by lowering her ranking position, while here we additionally allow the evader to take into consideration her influence in the network). To this end, she utilizes a set of strategies, $T_e$, consisting of combinations of edge modifications in her neighbourhood, with the maximum number of permitted modifications being specified by a budget, $b$.

In our experiments, we pay particular attention to the only available evader strategy in the literature, namely ROAM (Remove One Add Many)~\cite{waniek2018hiding}. In particular, the ROAM heuristic involves two steps.
\textit{Step~1:} Remove the edge between the evader, $\vs$, and its neighbour of choice, $v_0$; \textit{Step~2:} Connect $v_0$ to $b-1$ nodes of choice, who are neighbours of $\vs$ but not of $v_0$. This simple heuristic has been shown to be rather effective in practice.

\medskip

\noindent \textbf{Utility functions:}
For any given pair of strategies, $(t_s,t_e)$, such that $t_s\in T_s$ and $t_e\in T_e$, the utility of the evader is:
$$
U_e(\phi,t_s,t_e) = \phi U^R_e(t_s,t_e) + (1-\phi)U^{I}_e(t_e)
$$
where:
\begin{itemize}
\item $U^R_e(t_s,t_e) \in \R$ is the evader's utility from the change in her rank according to the centrality measure $t_s$ chosen by the seeker, when the evader plays strategy $t_e$,
\item $U^{I}_e(t_e) \in \R$ is the evader's utility from the change in her \textit{influence} within the network when she plays strategy $t_e$,
\item $\phi \in \left\{\frac{1}{m+1},\ldots,\frac{m}{m+1}\right\}$ represents the evader's evaluation of $U^R_e(t_s,t_e)$ relative to $U^{I}_e(t_e)$, we will refer to $\phi$ as the \emph{type} of the evader, with $m$ being the number of types.
\end{itemize}

Next, we specify how $U^R_e(t_s,t_e)$ and $U^{I}_e(t_e)$ are calculated (Figure~\ref{fig:util} depicts both functions). Let $r_e(t_s,t_e)$ be the evader's ranking when she plays strategy $t_e$ and the seeker plays strategy $t_s$. Then, $U^R_e(t_s,t_e)$ is calculated as follows:
\[
U^R_e(t_s,t_e) = \frac{1}{\alpha \left(1+e^{-k(r_e(t_s,t_e)-d)}\right)}-\frac{\beta}{\alpha},
\]
where $e$ is Euler's number, $k$ is the curve steepness, $d$ is the inflection point, $\beta = \frac{1}{1+e^{-k(1-d)}}$ and $\alpha = (1-2\beta)$. This formula has the following desirable properties:

\begin{itemize}
\item The evader's utility is $0$ when ranked first, i.e., fully exposed. Formally, $U^R_e(t_s,t_e) = 0$ when $r_e(v)=1$.
\item The evader's utility increases when she becomes more hidden. Formally, $U^R_e(t_s,t_e)$ increases with $r_e(t_s,t_e)$.
\item $U^R_e(t_s,t_e)$ is \textit{convex} for $1 \leq r_e(t_s,t_e)\leq d$, meaning that the marginal gain in utility increases with ranking drop, as long as the evader does not reach position $d$.
\item $U^R_e(t_s,t_e)$ is \textit{concave} for $d \leq r_e(t_s,t_e) \leq n$, i.e., dropping beyond position $d$ produces diminishing returns to the evader.
\end{itemize}

Finally, note that $U^R_e(t_s,t_e) \rightarrow 1+\frac{\beta}{\alpha}$ when $r_e(t_s,t_e) \rightarrow n$. Having specified how $U^R_e(t_s,t_e)$ is calculated, we now move to $U^{I}_e(t_e)$. Recall that the evader's influence is measured according to either the \emph{independent cascade} model or the \emph{linear threshold} model \cite{shaw1954group,beauchamp1965improved}. Regardless of which model is used, let $\Delta_e(t_e)$ denote the relative change in the evader's influence when she plays strategy $t_e$, i.e., $\Delta_e(t_e) = (I_e(t_e)-I^0_e)/I^0_e$, where $I_e(t_e)$ is the evader's influence when she plays strategy $t_e$, and $I^0_e$ is the evader's initial influence before playing. Then, $U^I_e(t_e)$ is calculated as follows:
$$
	U^I_e(t_e) = \begin{cases}
	\Delta_e(t_e), & \mbox{if } \Delta_e(t_e) > 0 \\
    -\Delta_e(t_e)^2, & \mbox{if } \Delta_e(t_e) \le 0
	\end{cases}
$$
This formula has some desired properties. Firstly, $U^I_e(t_e)$ is concave when $\Delta_e(t_e) \le 0$, meaning that the marginal loss in utility grows with the loss in influence (this is intuitive in scenarios where the evader does not mind a negligible drop in influence in return for a better disguise, but strongly opposes a significant drop in influence). Secondly, when $\Delta_e(t_e) \geq -1$, we have $U^I_e(t_e) \geq -1$, and as $\Delta_e(t_e)$ increases, $U^I_e(t_e)$ reaches a similar order of magnitude as that of $U^R_e(t_s,t_e)$, meaning that the equilibrium is not dominated by any of those two utilities. 

\begin{figure}[t]
\center
\includegraphics[width=.3\linewidth]{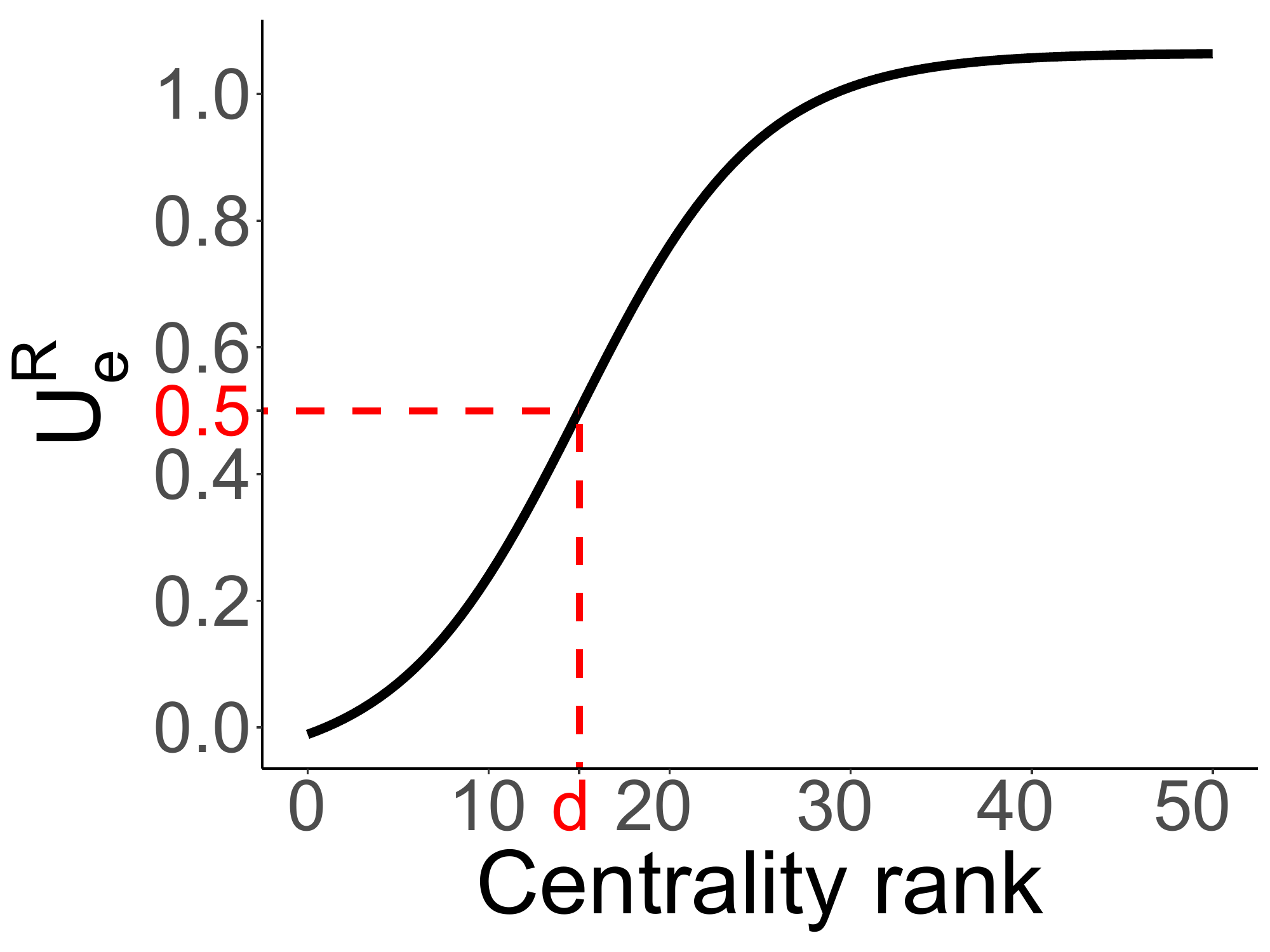}
\includegraphics[width=.3\linewidth]{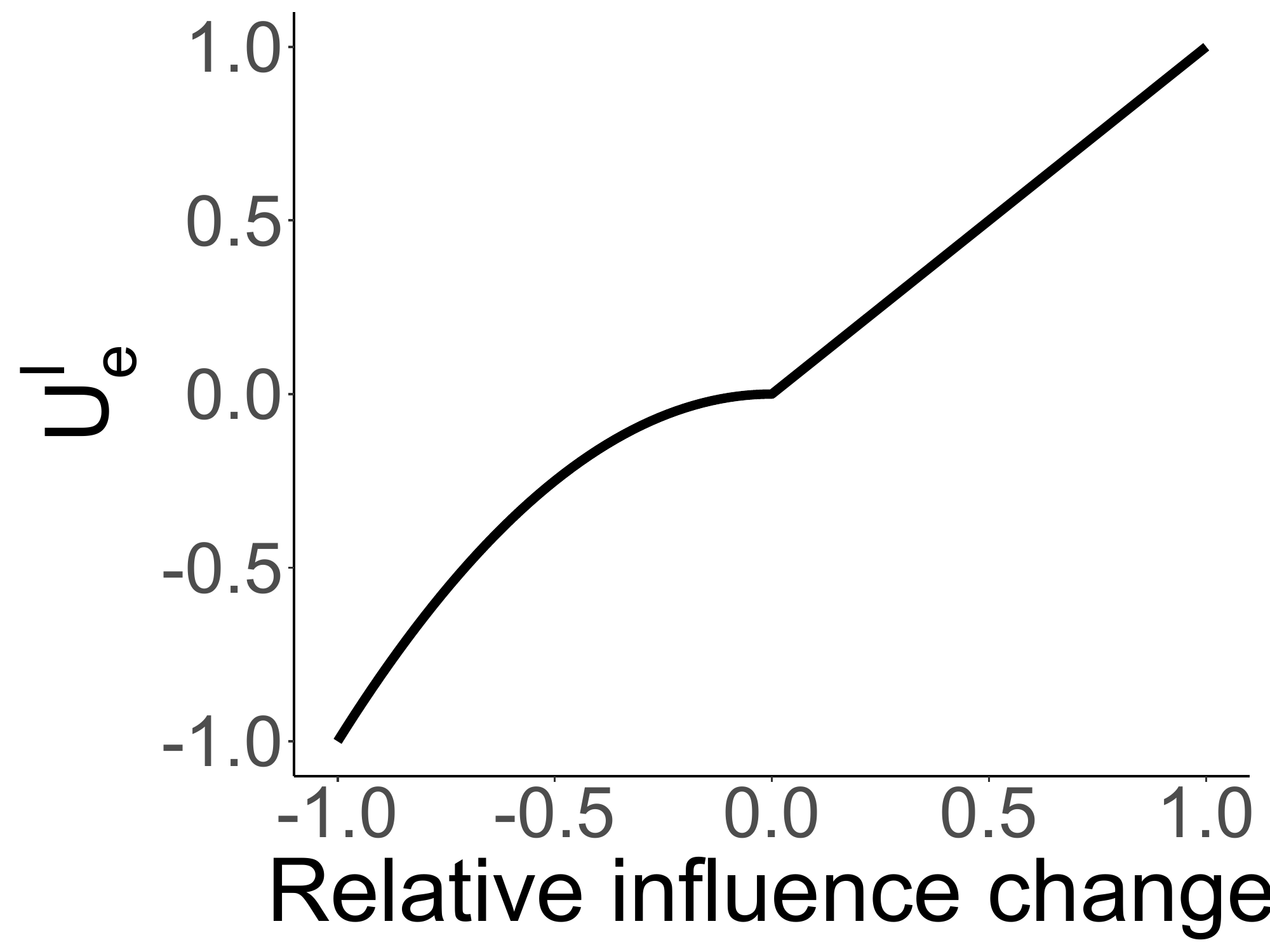}
\caption{The evader's utility functions for $d = 15$ and $k = \frac{3}{d}$.
}
\label{fig:util}
\end{figure}

Let us now turn our attention to the utility of the seeker. In our analysis we consider two different versions of the game: \emph{zero-sum game} and \emph{non-zero-sum game}. In the zero-sum version of the game we assume that the seeker is interested in minimizing the total utility of the evader, i.e., the seeker's utility is $U_s = -U_e$. In the non-zero-sum version in the game we assume that the seeker is interested solely in identifying the evader, i.e., the seeker's utility is $U_s = -U_e^R$. Notice that in the latter version of the game the seeker completely disregards any utility that the evader might gain from the change in her influence. We assume that the payoffs and the distribution of evader types are common knowledge, while the actual evader's type is private.

\medskip

\noindent\textbf{The Stackelberg game:}
Our model allows for mixed strategies. More specifically, let $p_{s}(t_s)$ be the probability that the seeker plays pure strategy $t_s\in T_s$. Moreover, let $p(\phi)$ be the probability that the evader type is $\phi$, and let $p^{\phi}_{e}(t_e)$ be the probability that an evader of type $\phi$ plays pure strategy $t_e\in T_e$. Now since the evader moves second, i.e., she knows the strategy of the seeker, then we can restrict her available strategies to only pure ones. Hence, the probability that an evader of type $\phi$ plays pure strategy $t_e\in T_{e}$ is $p_{e}^{\phi}(t_e) \in \{0,1\}$. The seeker's objective is to maximize her expected payoff. This optimization problem  can be formulated as a Mixed-Integer Quadratic problem:

\begin{alignat*}{2}
  & \text{max} && \sum_{\phi \in \Phi} \sum_{t_s\in T_{s}} \sum_{t_e\in T_{e}}  p(\phi) p^{\phi}_{e}(t_e) p_{s}(t_s) U_{s}(\phi,t_s,t_e)  \\
    	& \text{s.t.}  & \quad & \begin{aligned}[t]
        \ &\sum_{t_s\in T_{s}} p_{s}(t_s) = 1 \\
        \ &\sum_{t_e\in T_{e}} p^{\phi}_{e}(t_e) = 1  \\
    	\ &\lambda \ge \sum_{t_s\in T_{s}} p_{s}(t_s)U_e(\phi,t_s,t_e) \\
        \ &\lambda \le (1-p^{\phi}_{e}(t_e))\eta + \sum_{i\in T_{s}} p_{s}(t_s)U_e(\phi,t_s,t_e)
\end{aligned}
\end{alignat*}

The first and second constraints correspond to the probability distributions over the sets of strategies available to the players. As for $\eta\in\mathbb{R}$, it is an arbitrarily large number. This way, the third and fourth constraints ensure that, by solving the problem, we get:
$$
\lambda = \max\limits_{t_e\in T_e} \sum_{t_s\in T_{s}} p_{s}(t_s)  U_e(\phi,t_s,t_e).
$$
This is because, when $\eta$ is arbitrarily large, $(1-p^{\phi}_{e}(t_e))\eta$ reflects the fact that the evader will play the strategy that maximizes her expected payoff. Finally, in order to solve the problem efficiently, we linearize it by substituting variables: $z^{\phi}(t_s,t_e) = p^{\phi}_{e}(t_e)p_{s}(t_s)$. 
We use the linearization procedure described by Paruchuri et al.~\cite{paruchuri2008playing}.

\section{Empirical Analysis}

\begin{table}[t]
\centering
\begin{tabular}{lccc}  
\toprule
Network & Network & All & Undominated \\
& size & strategies & strategies \\
\midrule
WTC & 36 & 14190 & 60 \\
Bali & 17 & 280840 & 7 \\
Madrid & 70 & 45760 & 5 \\
Scale-Free & 30 & 61365 & 17 \\
Small-World & 30 & 902 & 36 \\
Erdos-Renyi & 30 & 4122 & 47 \\
\bottomrule
\end{tabular}
\caption{The number of possible strategies vs. the number of undominated strategies (for random networks, the number is taken as the average over $100$ such networks).}
\label{table:datasets}
\end{table}

\subsection{Network Datasets}

We now briefly describe the network datasets used in our analysis.
We consider three standard models of random networks (for each model, we generate $100$ networks consisting of $30$ nodes):

\begin{itemize}
\item \emph{Scale-free} networks, generated using the Barabasi-Albert model~\cite{barabasi1999emergence}. The number of links added with each node is $3$.
\item \emph{Small-world} networks, generated using the Watts-Strogatz model~\cite{watts1998collective}. In our experiments, the expected average degree is $10$.
\item \emph{Random graphs} generated using the Erdos-Renyi model ~\cite{erdds1959random}. In our experiments, the expected average degree is $10$.
\end{itemize}

We also analyze a number of real-life network datasets.
We consider three terrorist networks, namely:
\begin{itemize}
\item \emph{WTC}---the network of terrorists responsible for the WTC 9/11 attack~\cite{Krebs:2002a};
\item \emph{Bali}---the network of terrorists behind the 2002 Bali attack~\cite{hayes2006connecting};
\item \emph{Madrid}---the network of terrorists responsible for the 2004 Madrid train bombing~\cite{hayes2006connecting}.
\end{itemize}

Finally, we consider anonymized fragments of three social media networks, namely Facebook, Twitter and Google+~\cite{leskovec2012learning}.

The networks that we consider in our experiments are of moderate size, as for every evader’s strategy we need to compute the ranking produced by each centrality measure, which in turn requires us to compute the centrality of all nodes.

\subsection{Experimental Process}

For each network, following the work by Waniek et al.~\cite{waniek2018hiding}, the evader is chosen as the node with the smallest sum of centrality ranks (based on Degree, Closeness, Betweenness and Eigenvector); ties are broken uniformly at random. The evader type $\phi$ is sampled uniformly at random from the set $\{0.2,0.4,0.6,0.8\}$.
All results for random networks are presented as an average over $100$ samples.

While the number of pure strategies of the seeker is rather small (we assume them to be the four main centrality measures), the number of pure strategies of the evader is much larger, since every possible way of rewiring the evader's neighbourhood may be considered a unique strategy. This very quickly becomes computationally challenging even for small networks and small budgets. For instance, in the case of the WTC network, the number of the evader's strategies for budget $b=3$ is $14,190$, for $b=4$ it is $148,995$, and for $b=5$ it is $1,221,759$.

With this in mind, to study the evader's entire space of possible strategies, we focus first on a version of the game that is more computationally feasible. More specifically, we analyze the \emph{zero-sum} version of the game, where the seeker's gain equals the evader's loss. This implies that the seeker is not only interested in the evader's centrality (as in the aforementioned model), but is also interested in the evader's influence (this is implied by the fact that the evader's utility does not only depend on her centrality but also on her influence). Importantly, this version of the game can be formulated as a linear program; hence, it is much easier to solve. By analyzing the zero-sum version of the model, we aim to understand the properties of the evader's most rewarding strategies. This understanding will help us identify effective heuristics for the evader, which in turn would enable us to study the original, more computationally-challenging version of the game.

\subsection{The Zero-Sum Version}

For each network we generated the payoff matrices corresponding to budgets $3$ and $4$ and both influence measures. We were also able to consider $10\%$ of the strategies corresponding to budget $5$ (except for the WTC network, where we considered $100\%$). Our main observations regarding the strategies are threefold.

Firstly, \textit{most of the evader's strategies are strongly dominated, regardless of the evader's type}. Specifically, given different networks, Table~\ref{table:datasets} specifies the number of all strategies as well as those that are undominated. As shown, less than 1\% of strategies are undominated, and this percentage is even smaller for larger networks.

\begin{figure}[t]
\center
\includegraphics[width=0.32\linewidth]{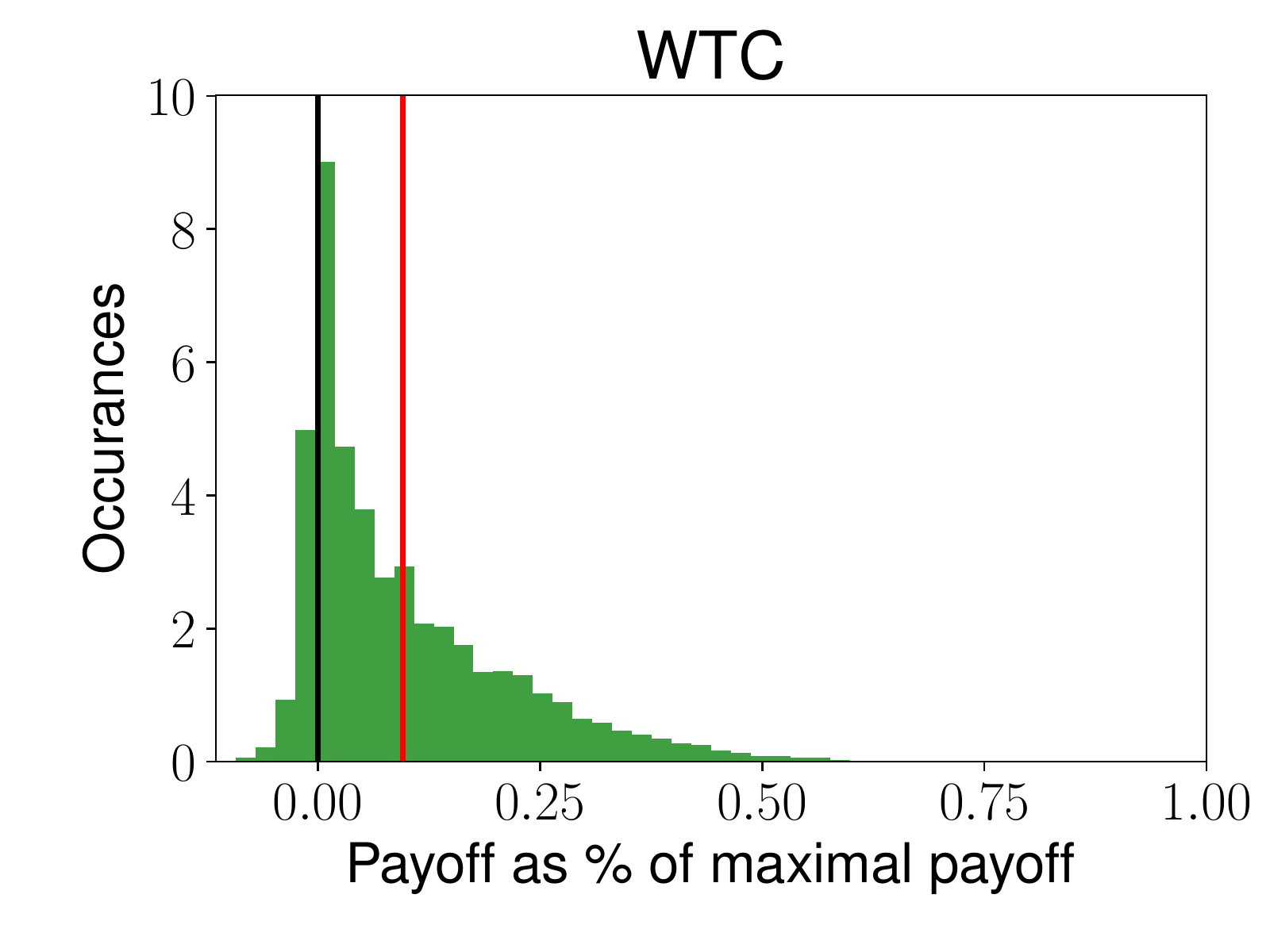}
\includegraphics[width=0.32\linewidth]{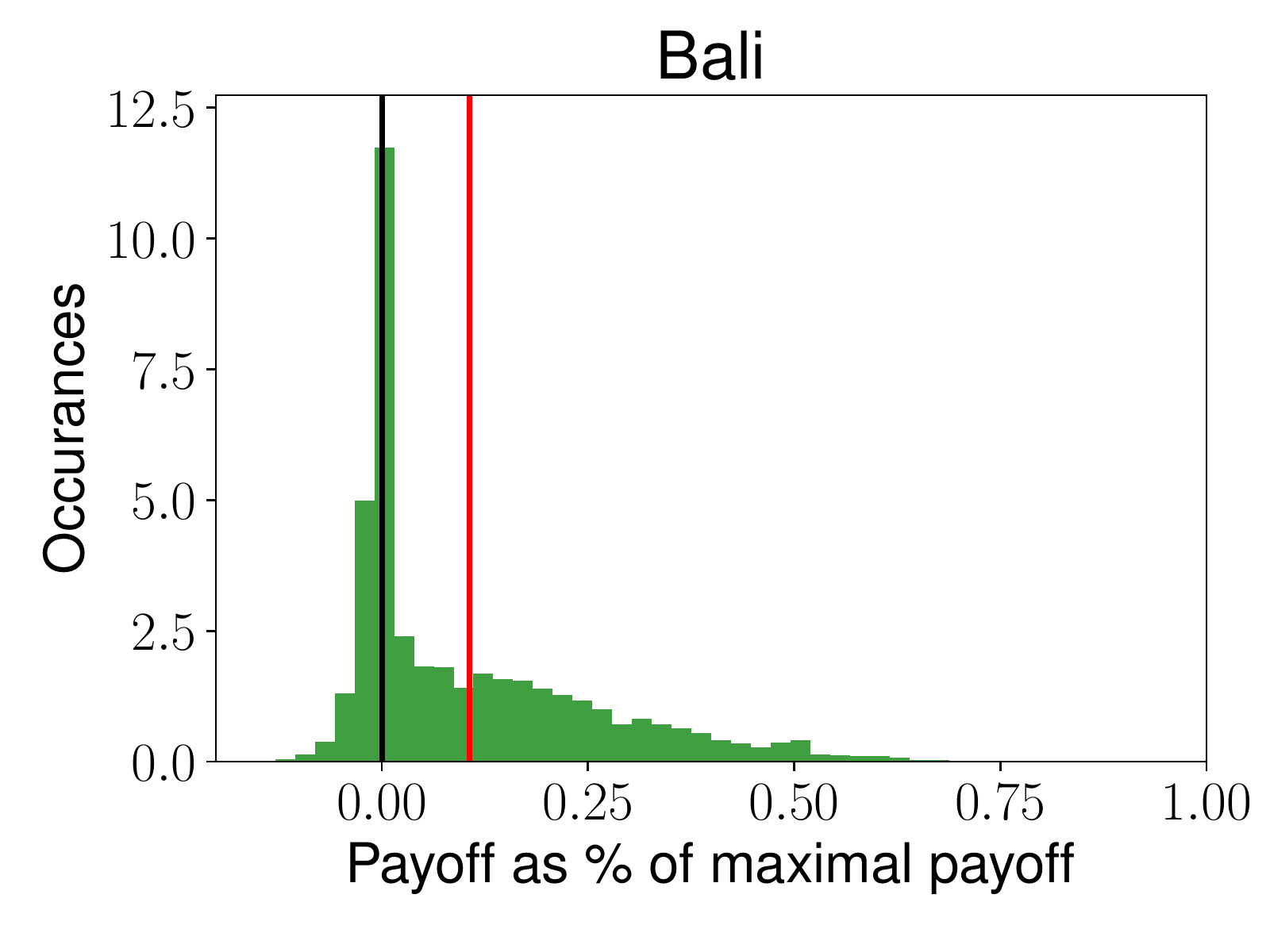}
\includegraphics[width=0.32\linewidth]{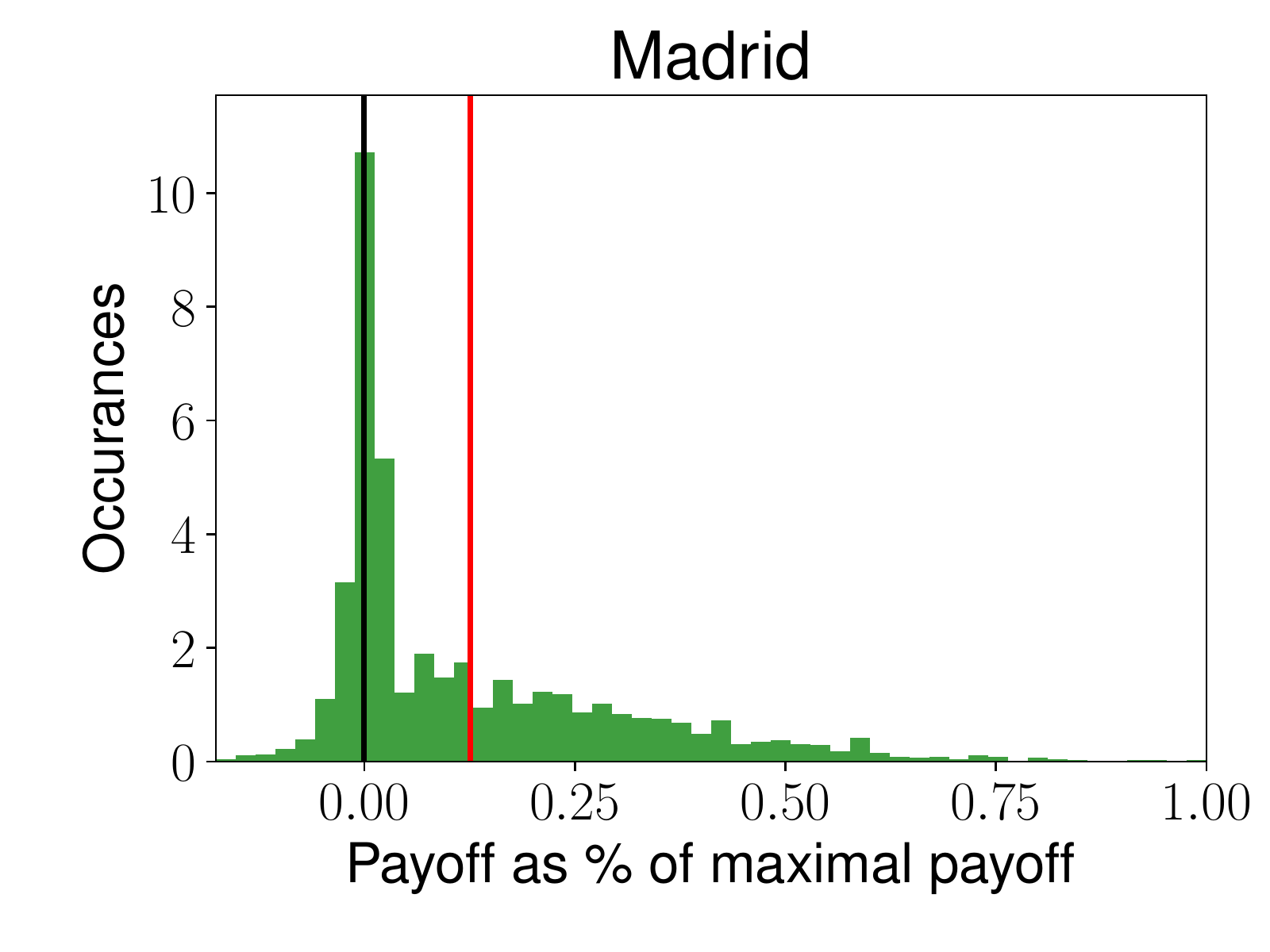}
\includegraphics[width=0.32\linewidth]{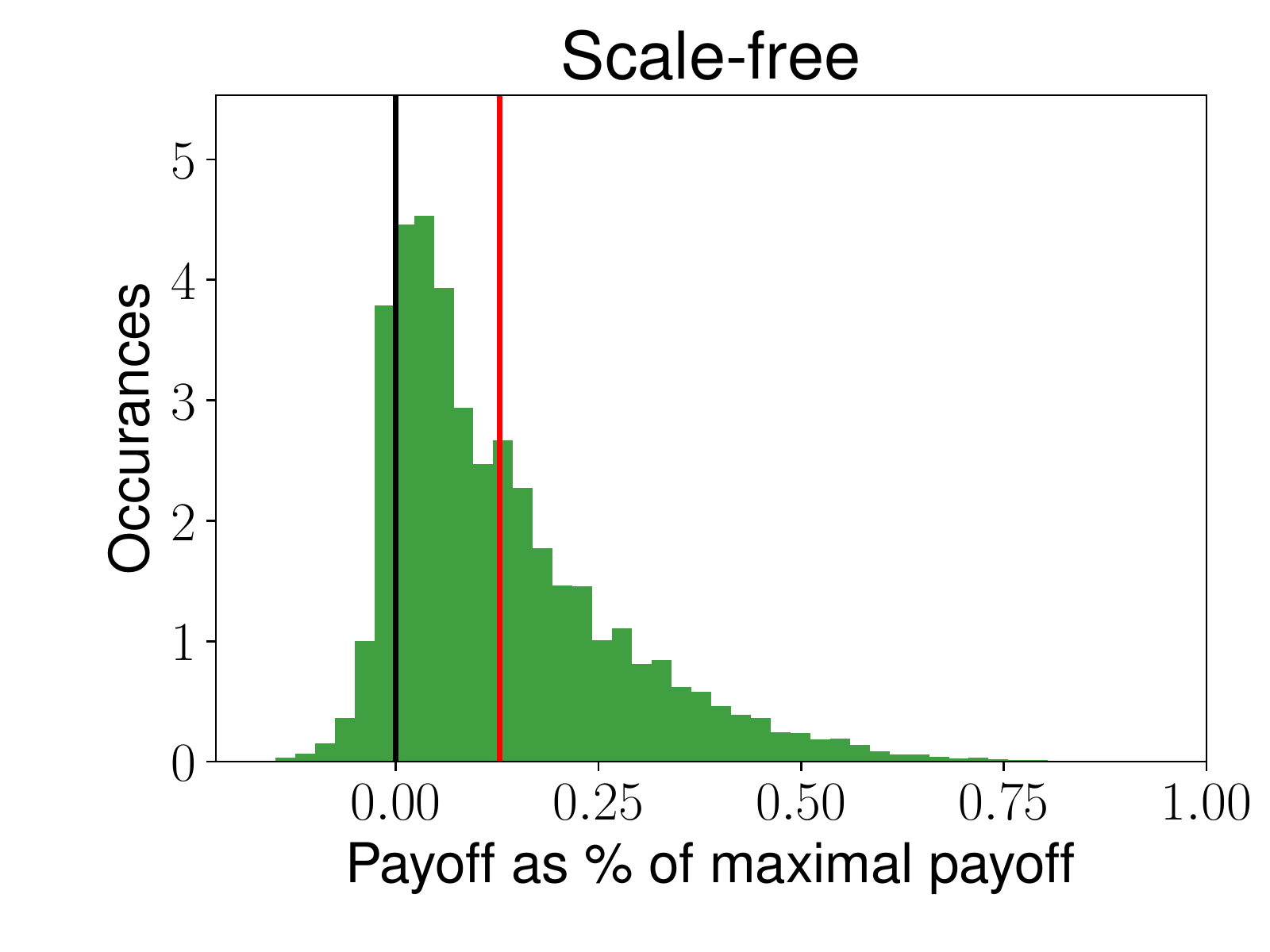}
\includegraphics[width=0.32\linewidth]{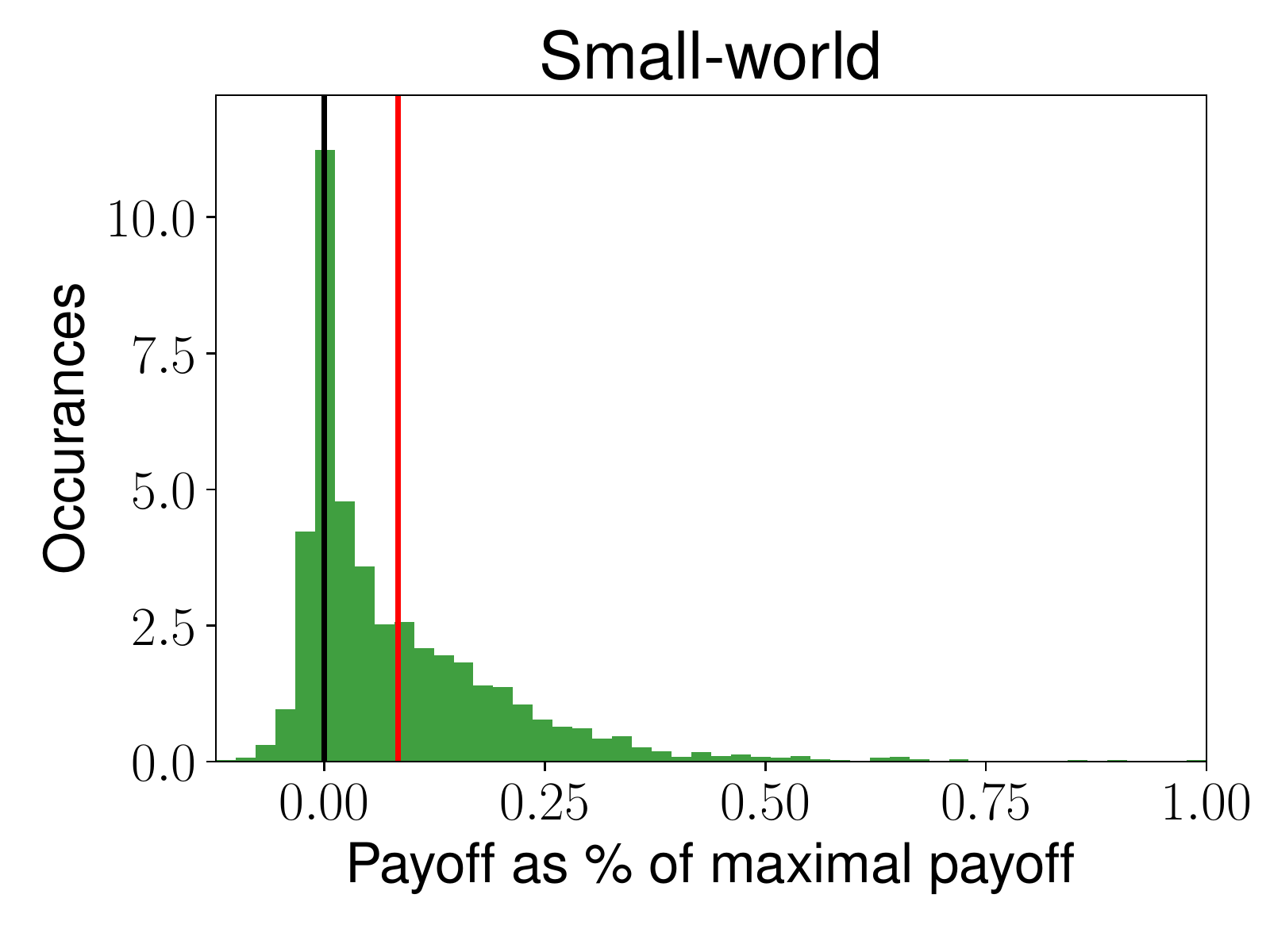}
\includegraphics[width=0.32\linewidth]{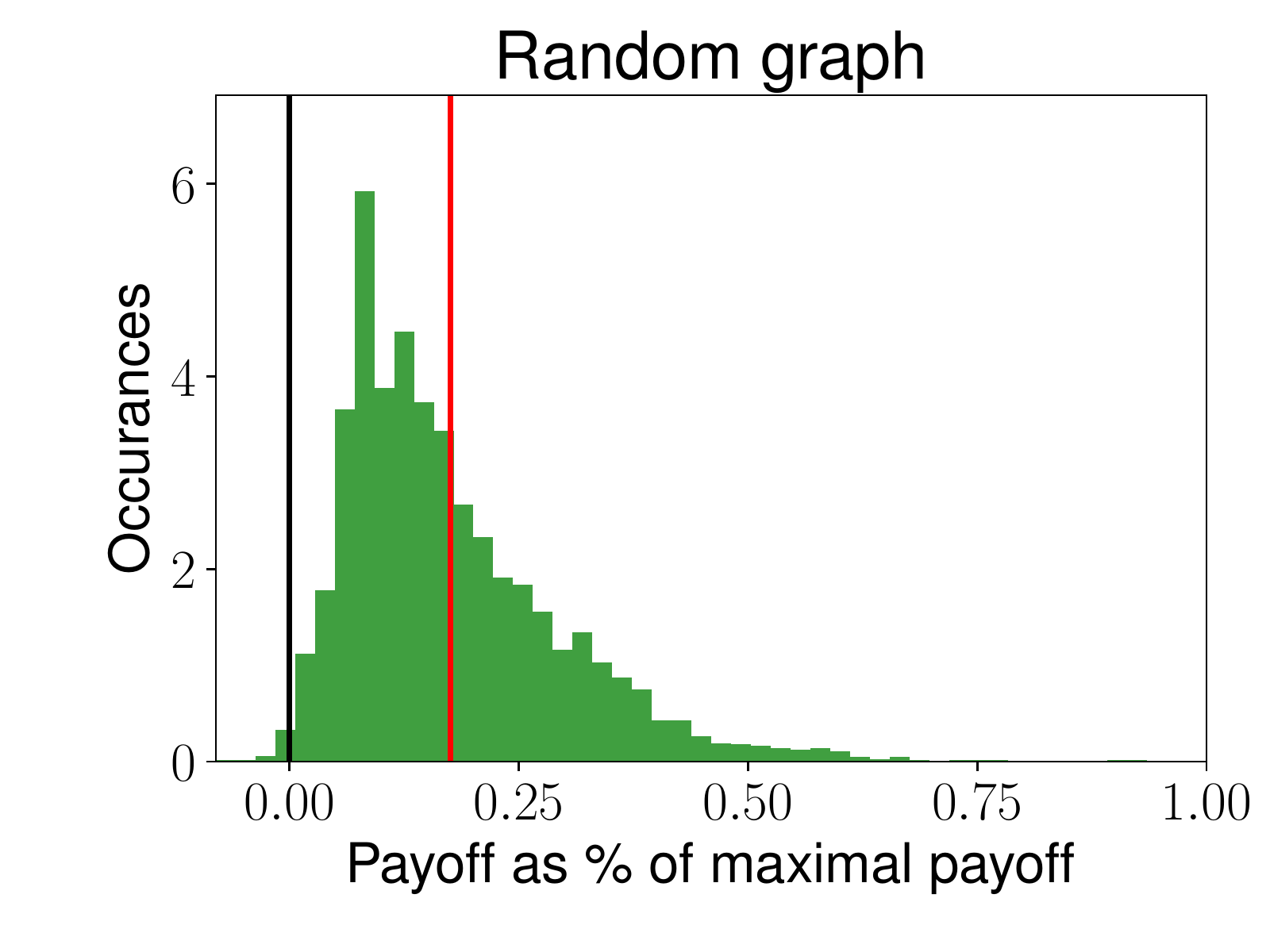}
\caption{The distributions of the evader's payoffs for budget $3$. Values are provided for evader type $\phi = 0.5$ and averaged over the seeker's equilibrium strategies. For each network, the red and black lines denote the average payoff and 0, respectively.}
\label{fig:hists}
\end{figure}

\begin{figure*}[t]
\center
\includegraphics[width=.26\linewidth]{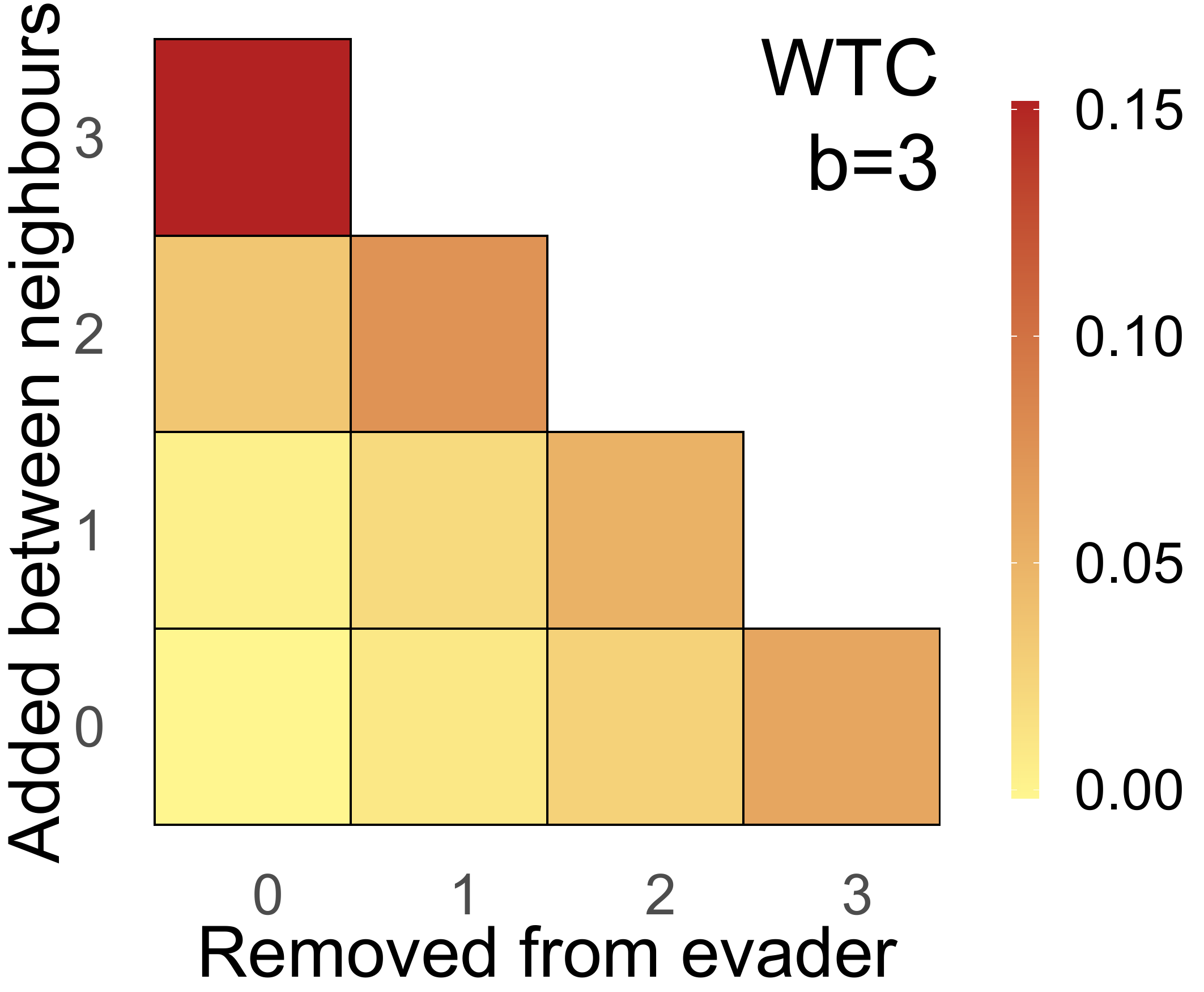}\hfill
\includegraphics[width=.26\linewidth]{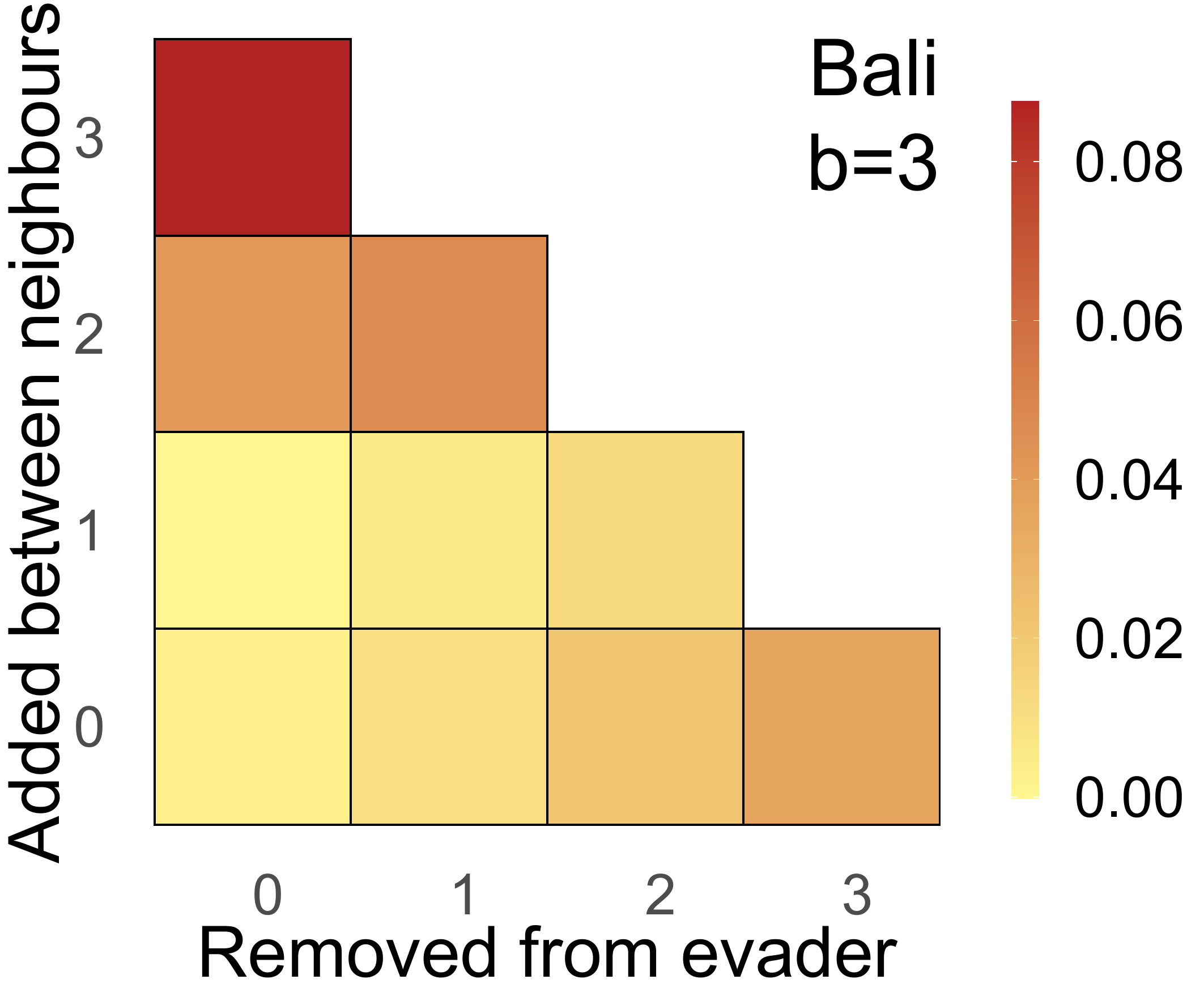}\hfill
\includegraphics[width=.26\linewidth]{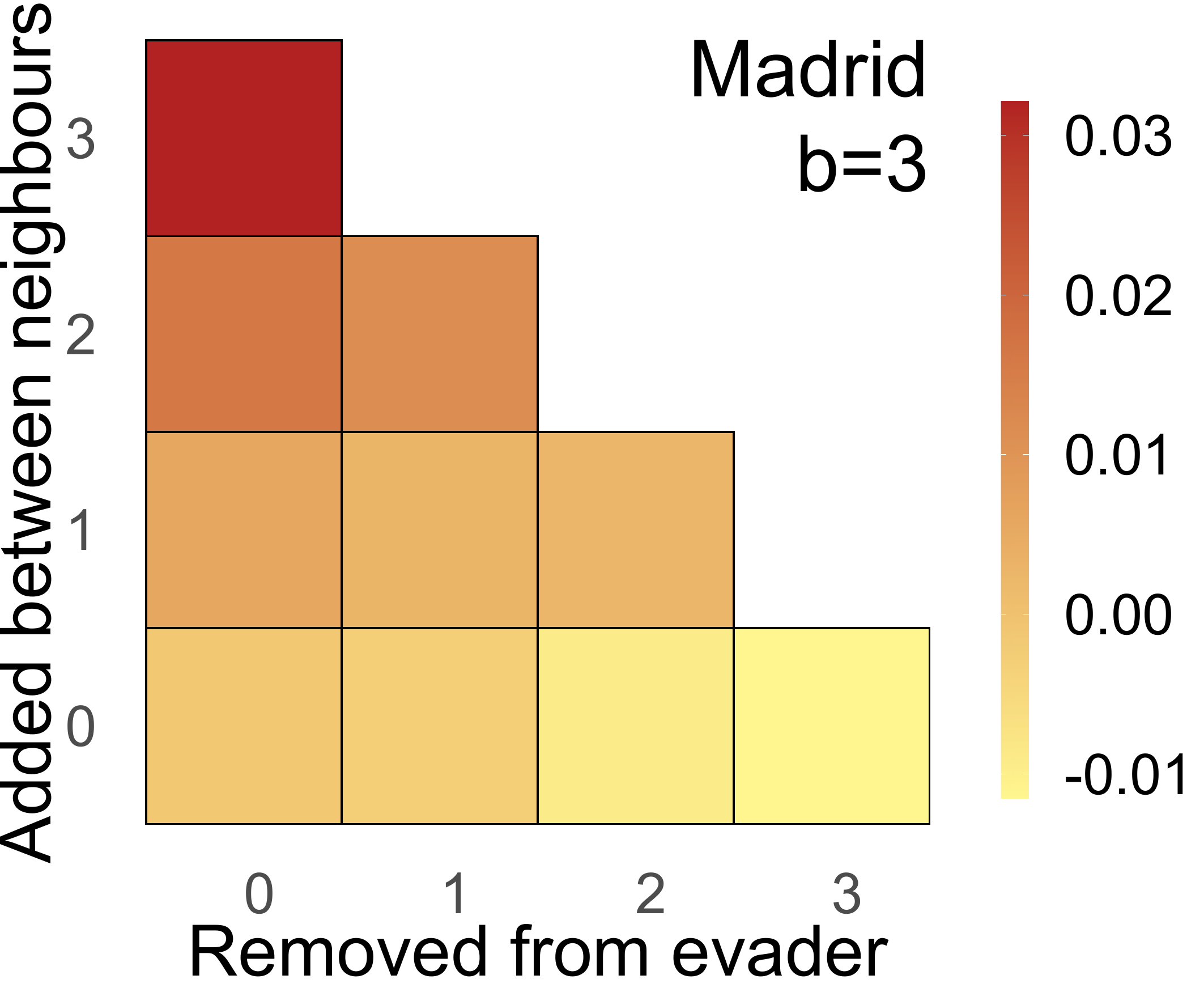}
\includegraphics[width=.26\linewidth]{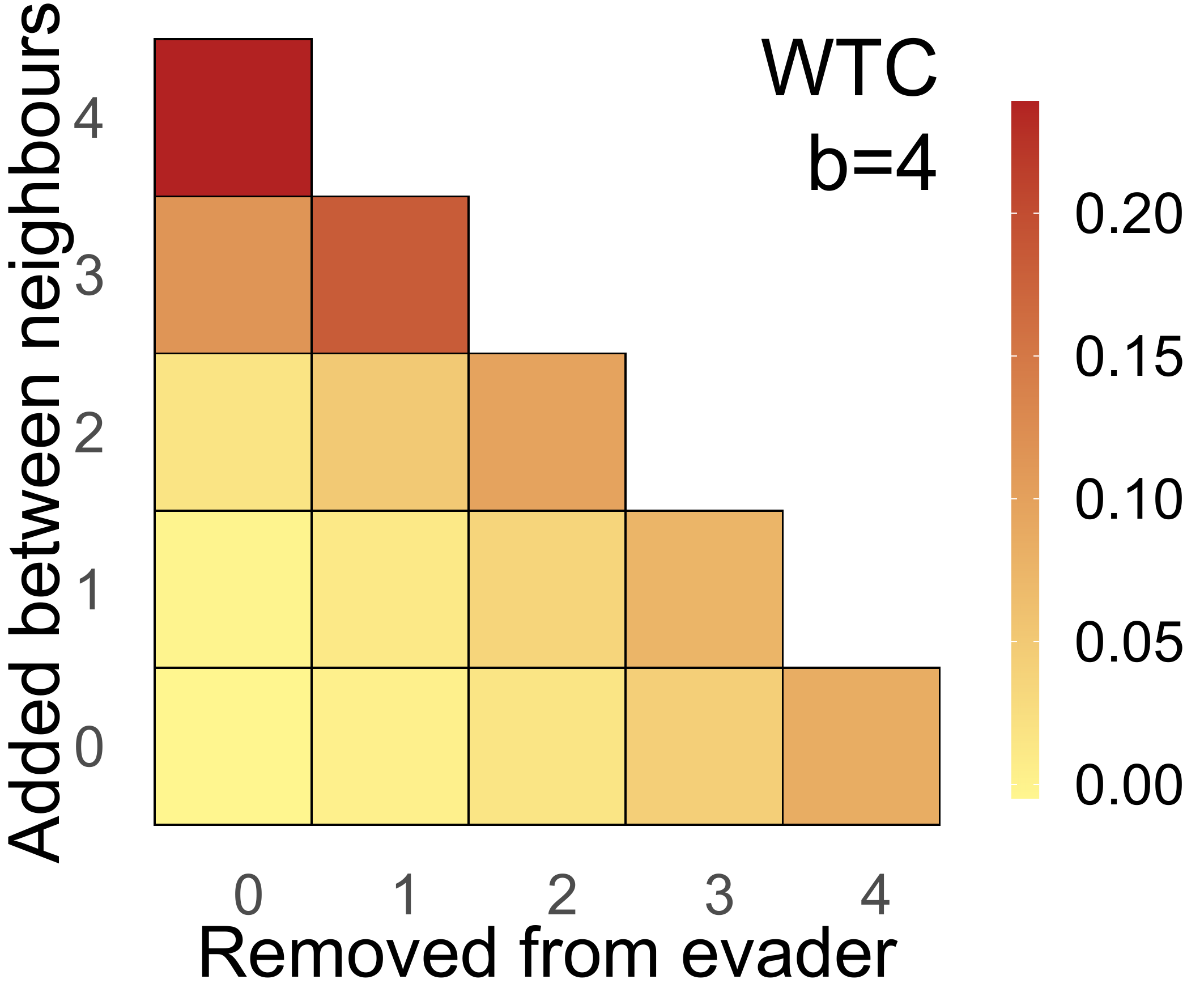}\hfill
\includegraphics[width=.26\linewidth]{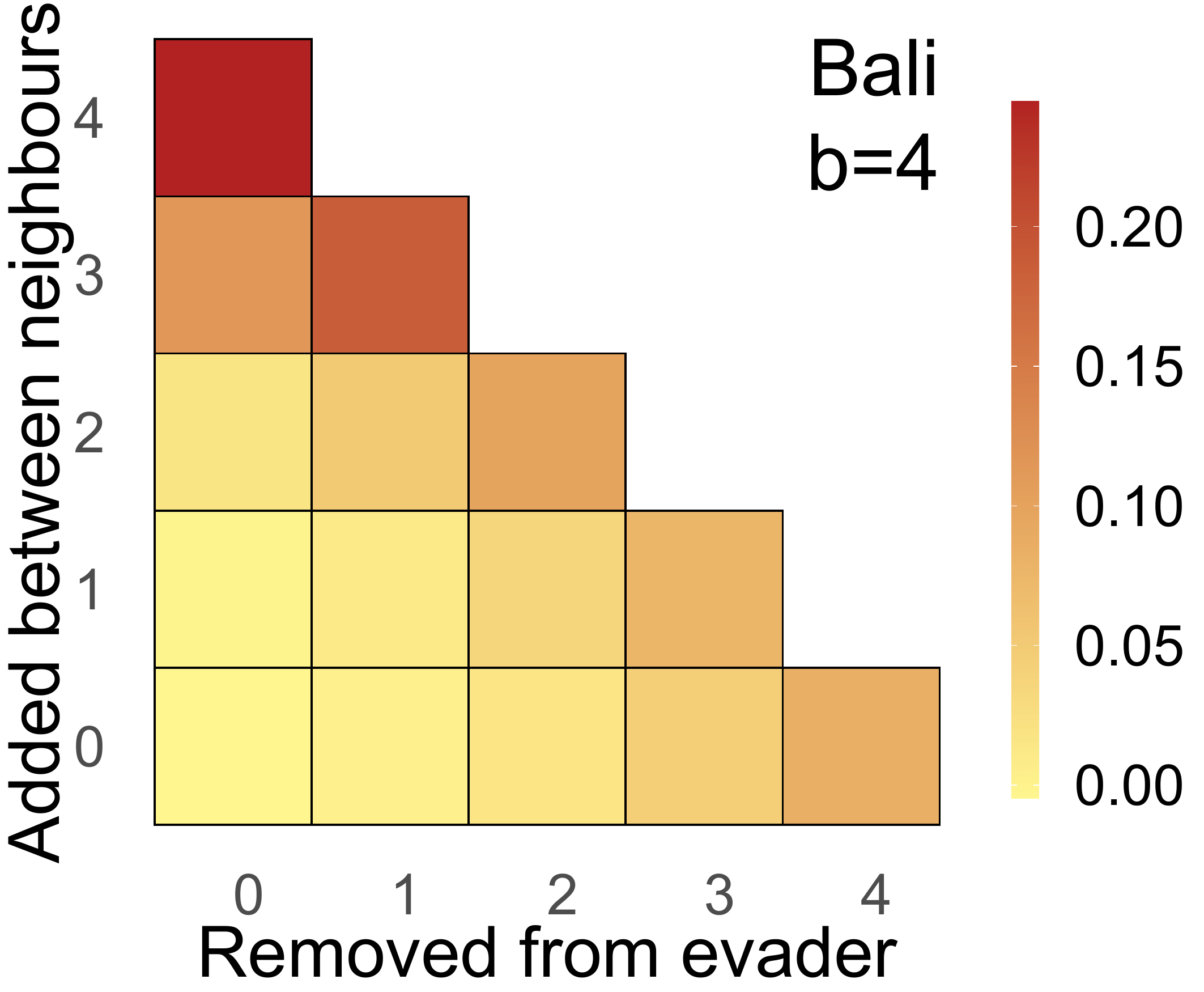}\hfill
\includegraphics[width=.26\linewidth]{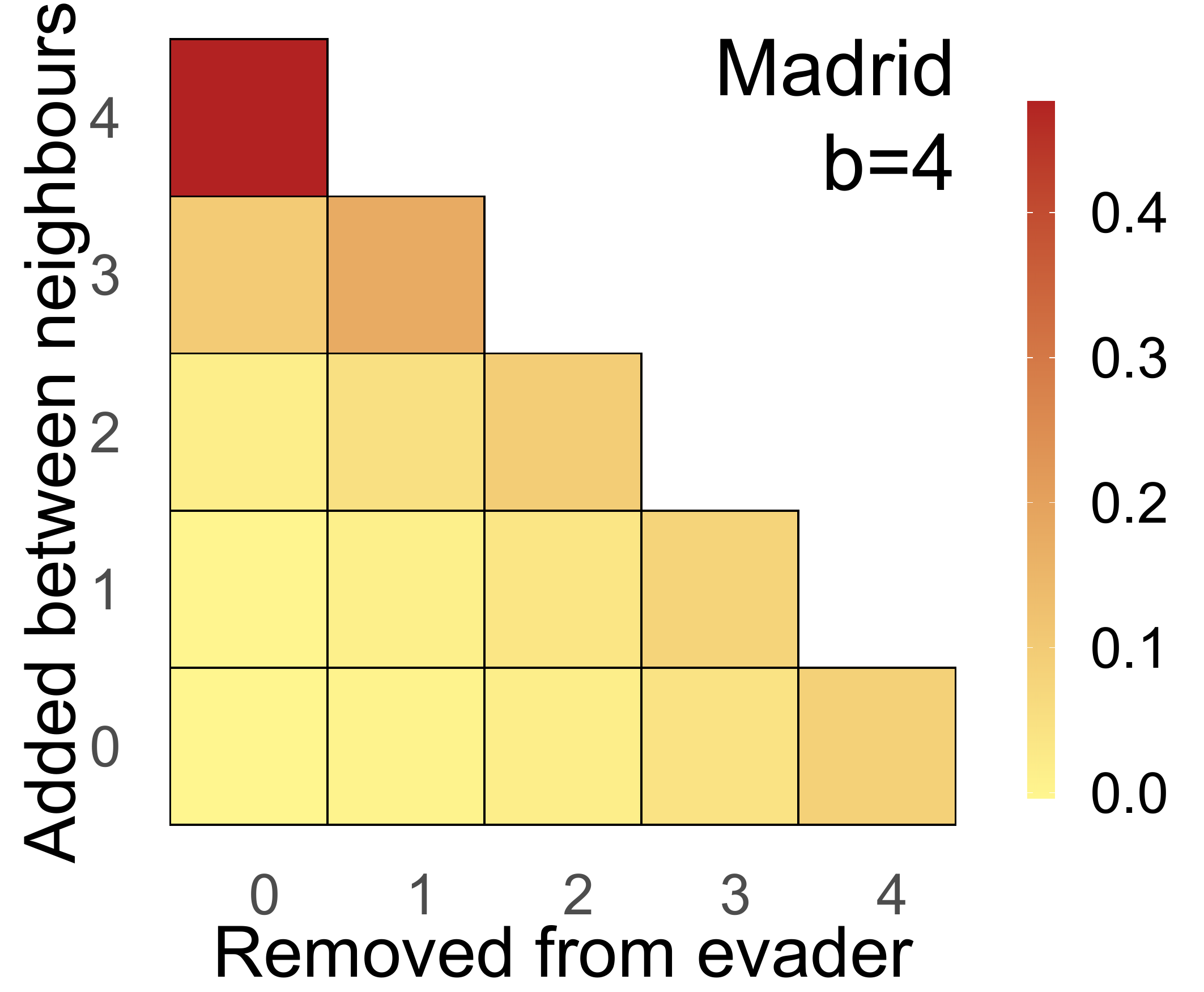}
\includegraphics[width=.26\linewidth]{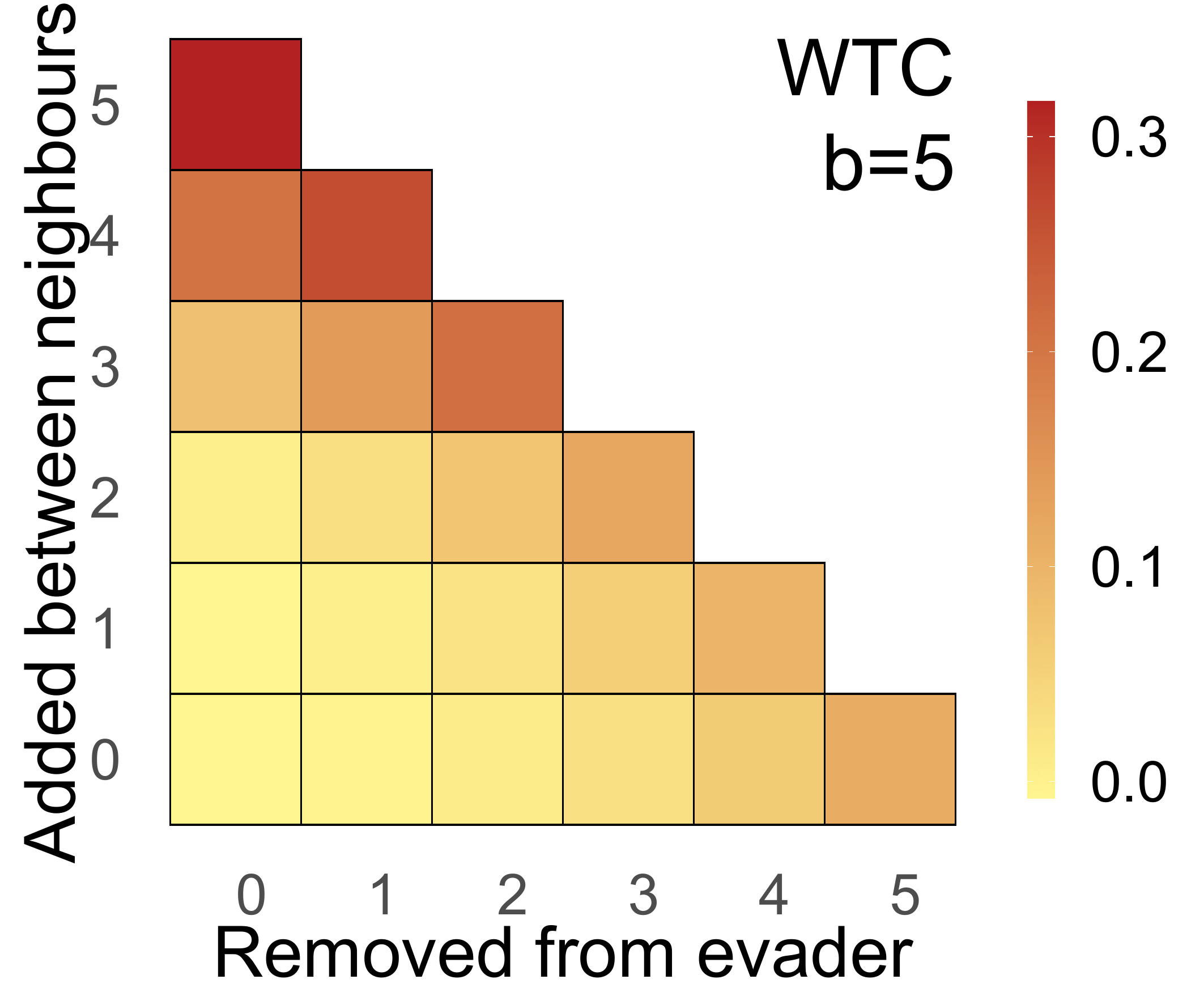}\hfill
\includegraphics[width=.26\linewidth]{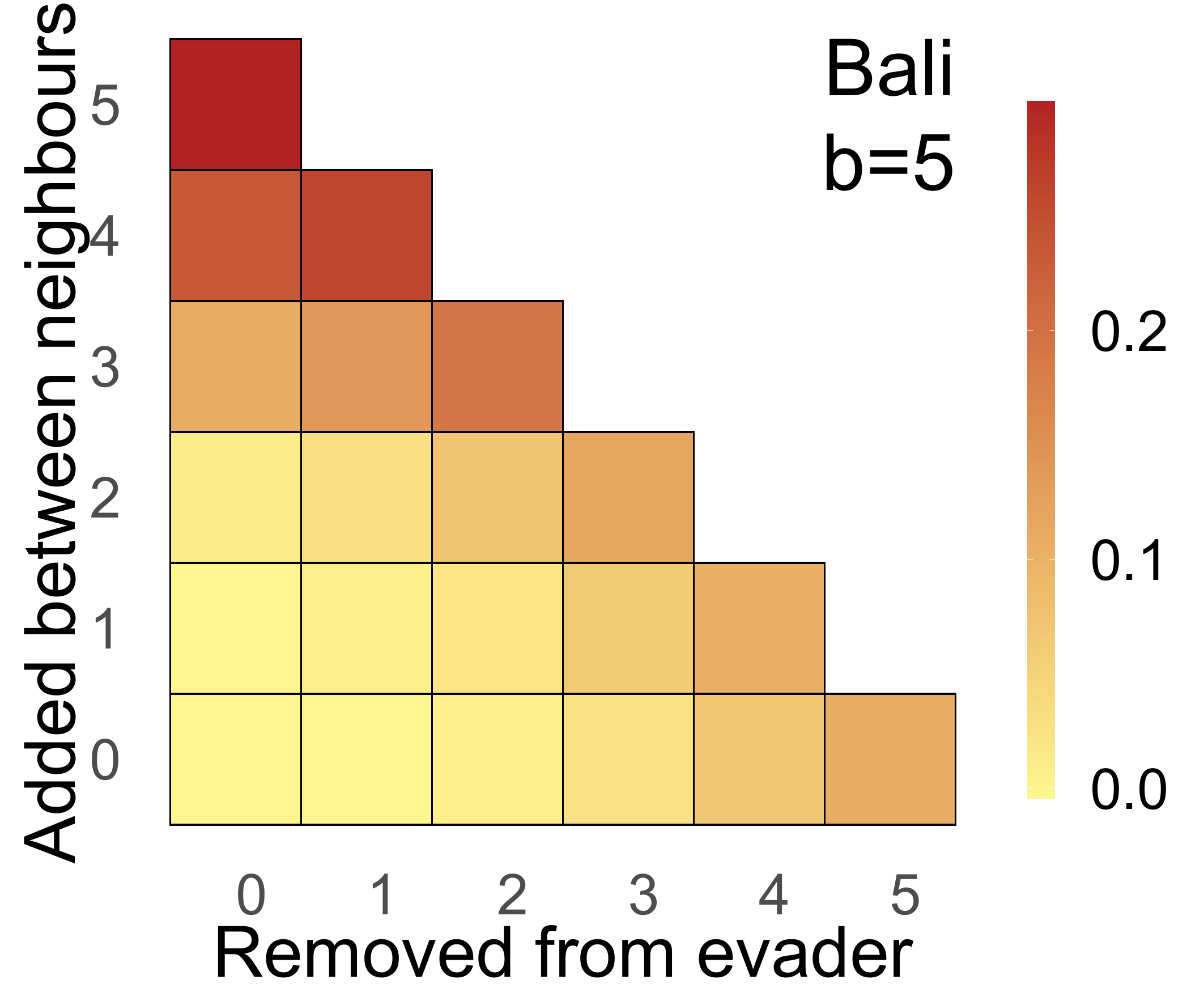}\hfill
\includegraphics[width=.26\linewidth]{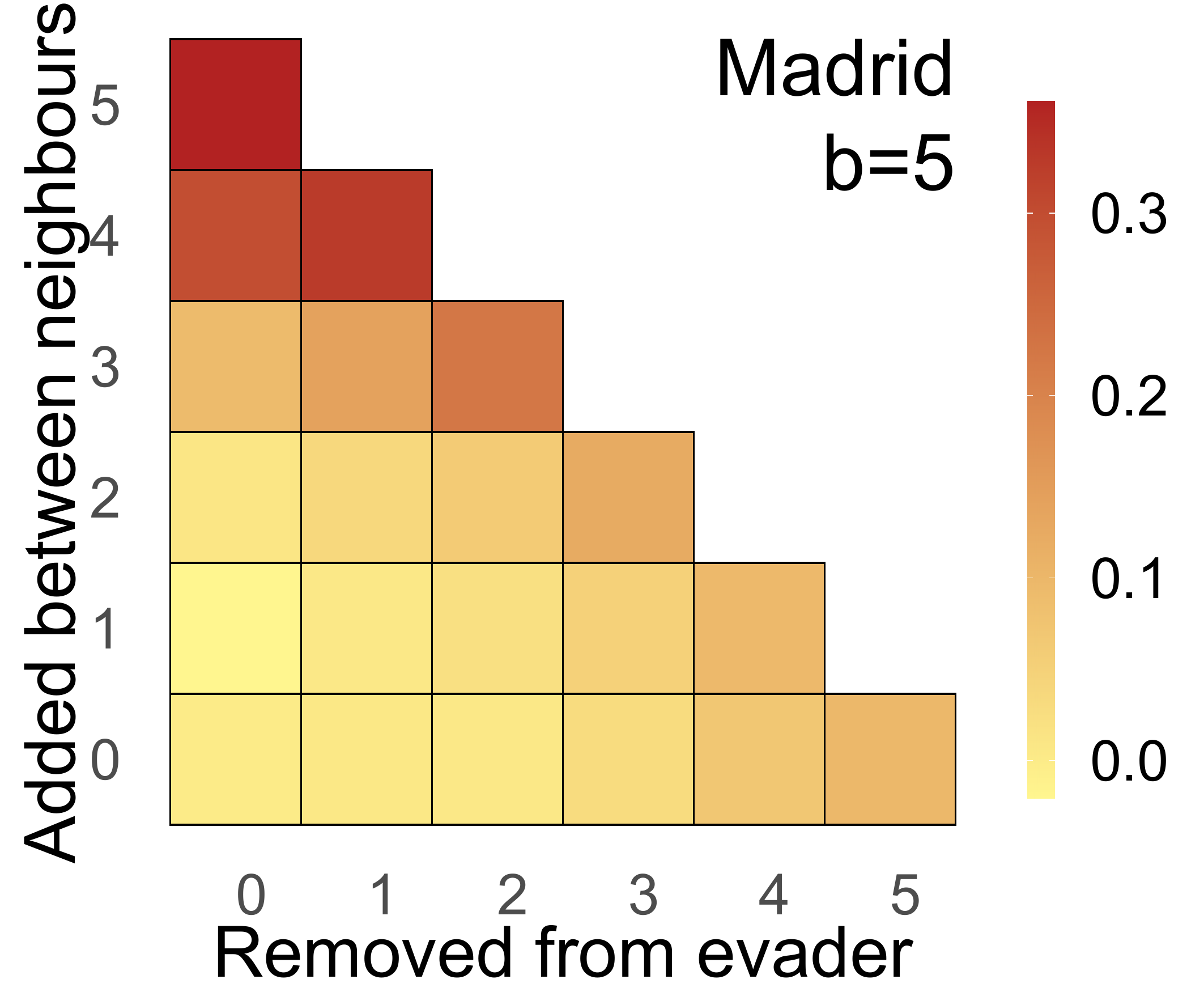}
\caption{
The evader's average payoff given the three terrorist networks (WTC, Bali, and Madrid), and given budgets $3$, $4$, and $5$. The x-axis represents the number of neighbours the evader is disconnected from, while the y-axis represents the number of edges added between the evader's neighbours. The color intensity of each cell represents the evader’s average payoff for given strategy.
}
\label{fig:heat:WS:ER}
\end{figure*}

\begin{figure}[t!]
\center
\includegraphics[width=.26\linewidth]{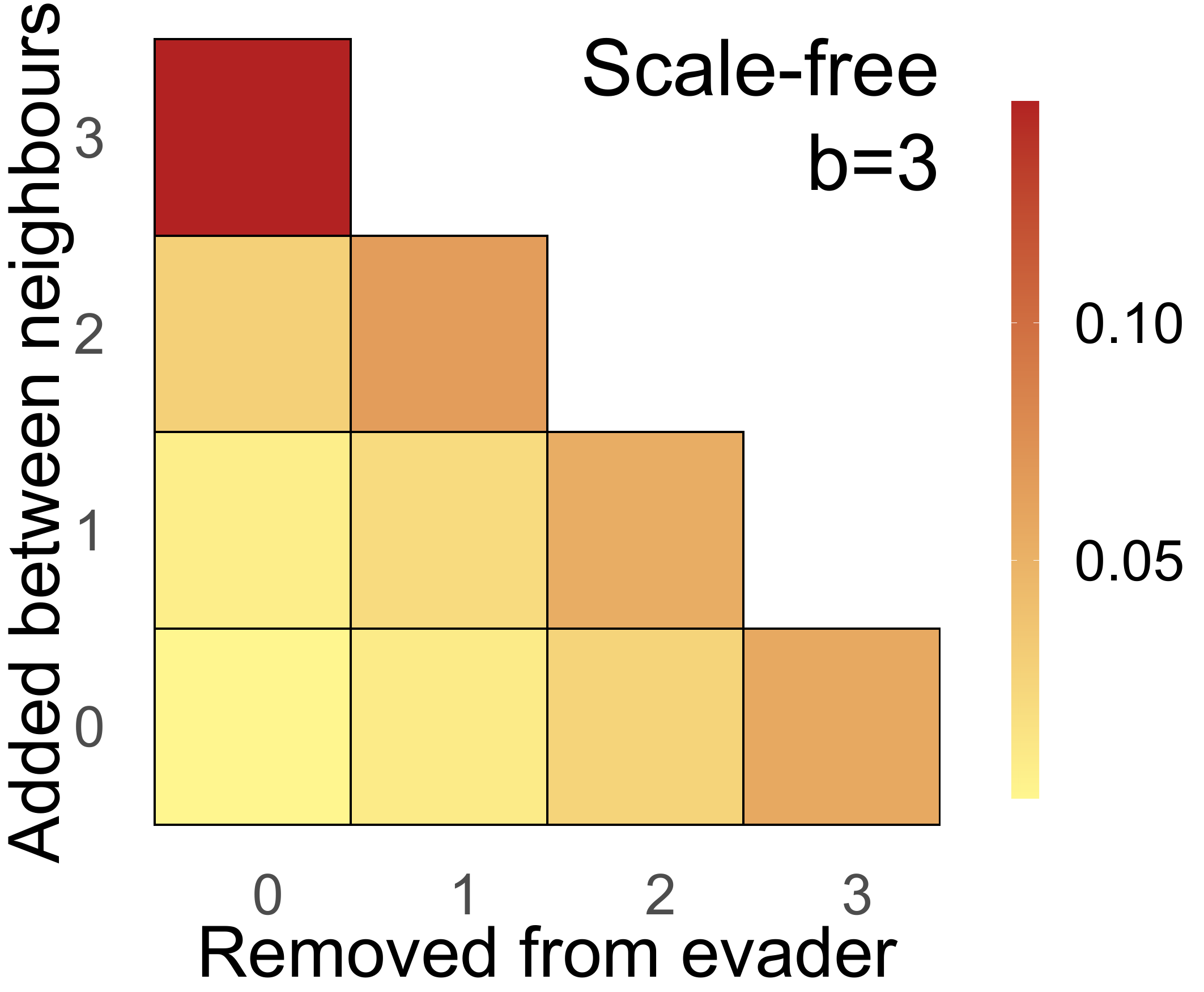}\hfill
\includegraphics[width=.26\linewidth]{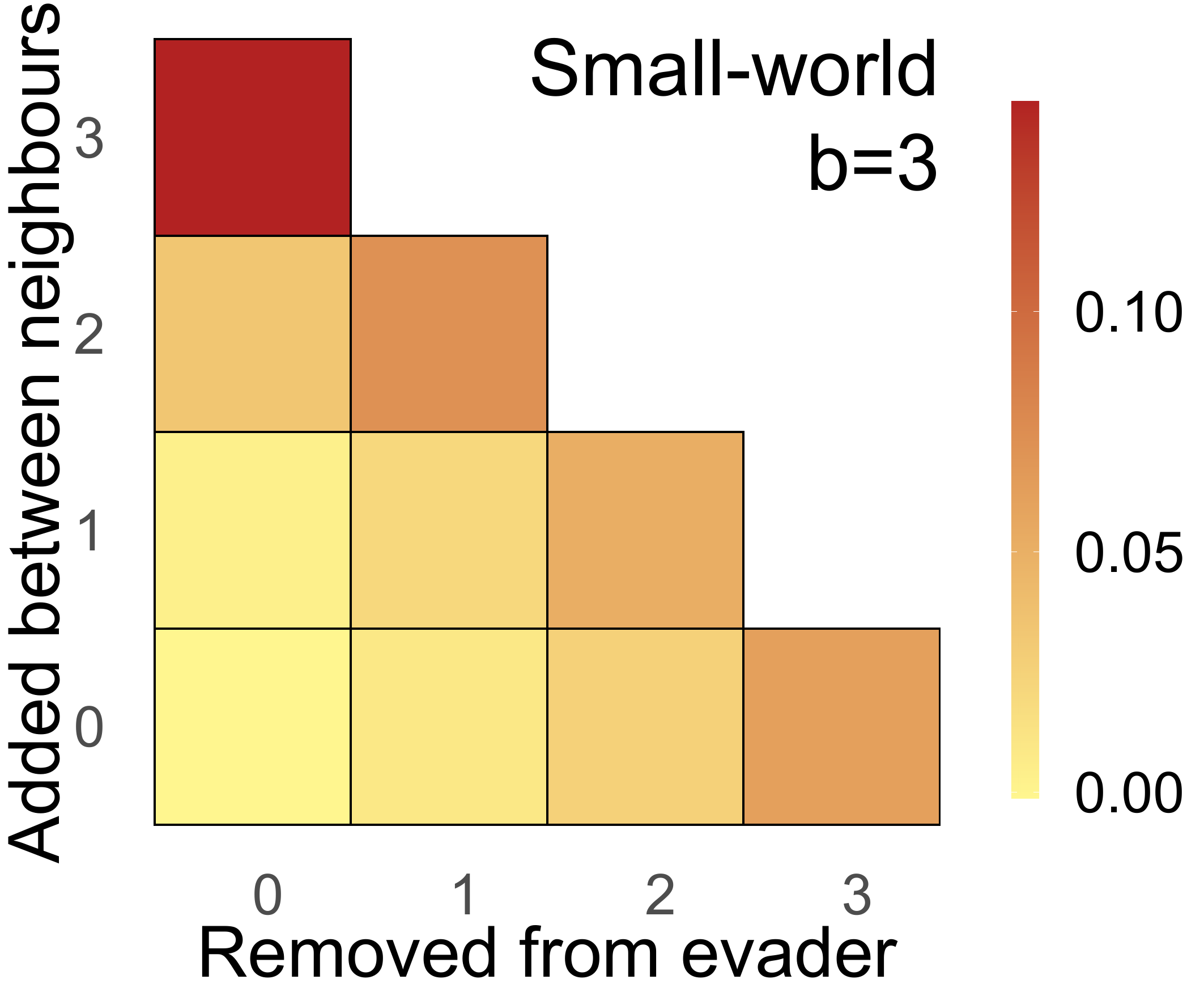}\hfill
\includegraphics[width=.26\linewidth]{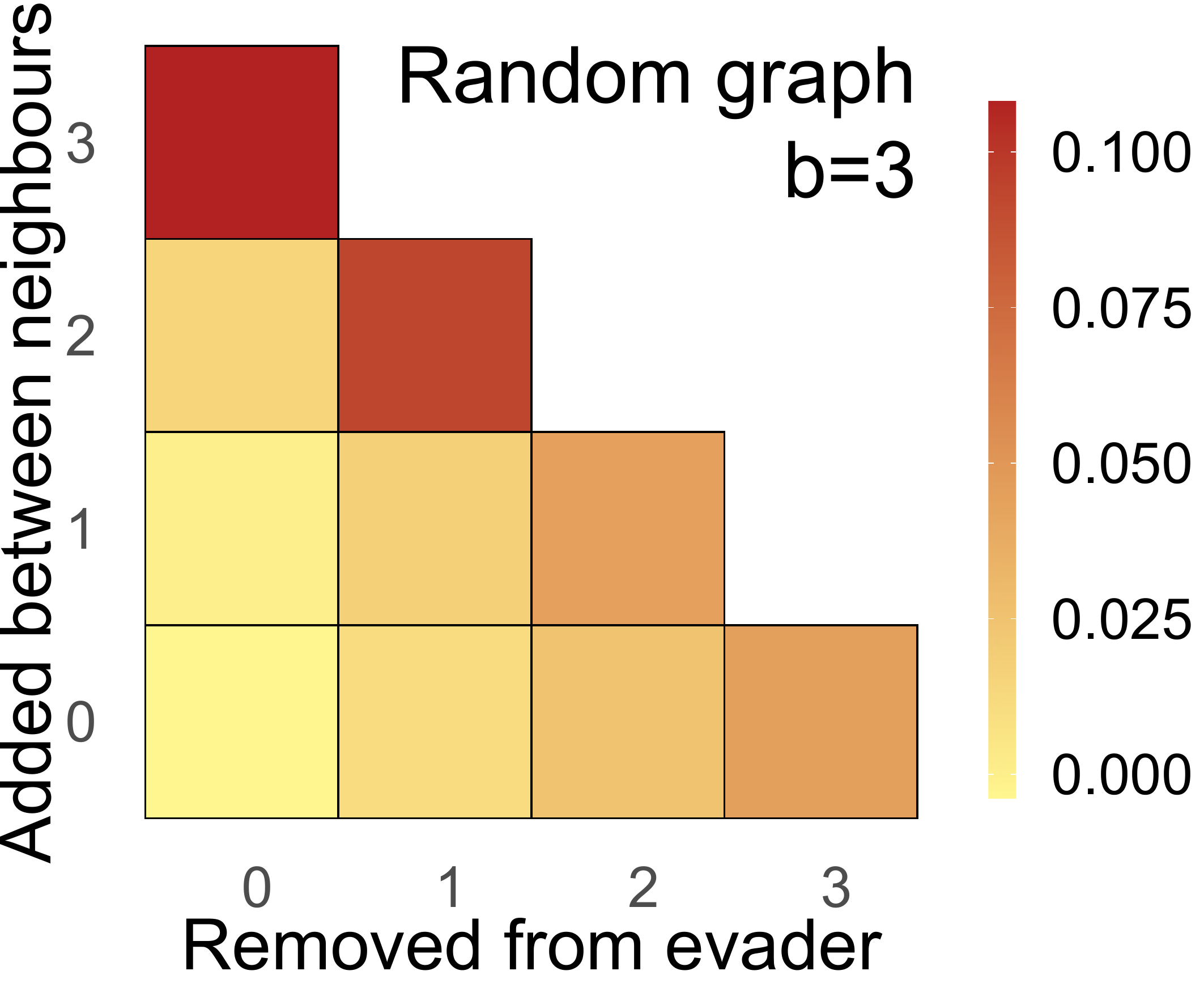}
\includegraphics[width=.26\linewidth]{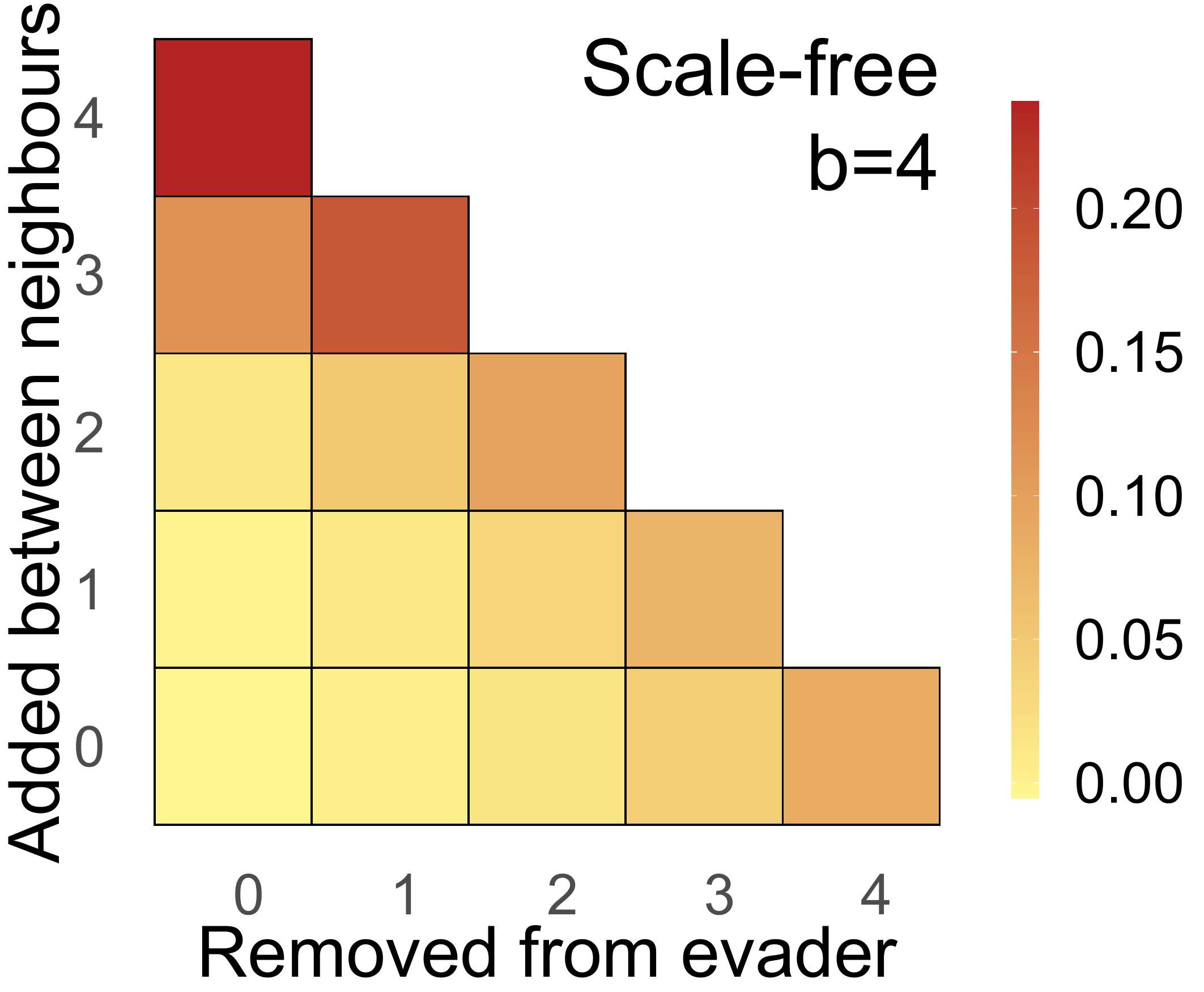}\hfill
\includegraphics[width=.26\linewidth]{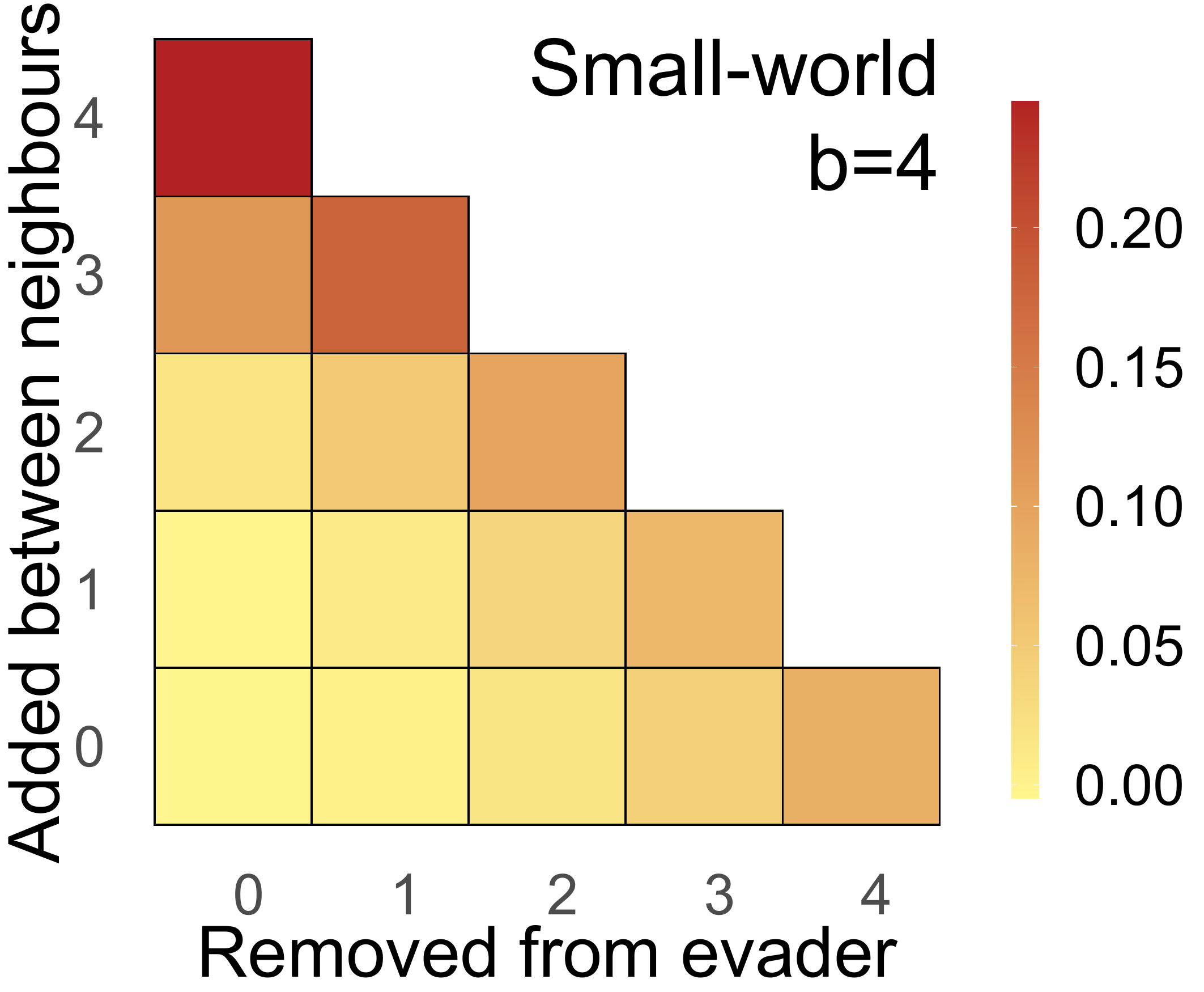}\hfill
\includegraphics[width=.26\linewidth]{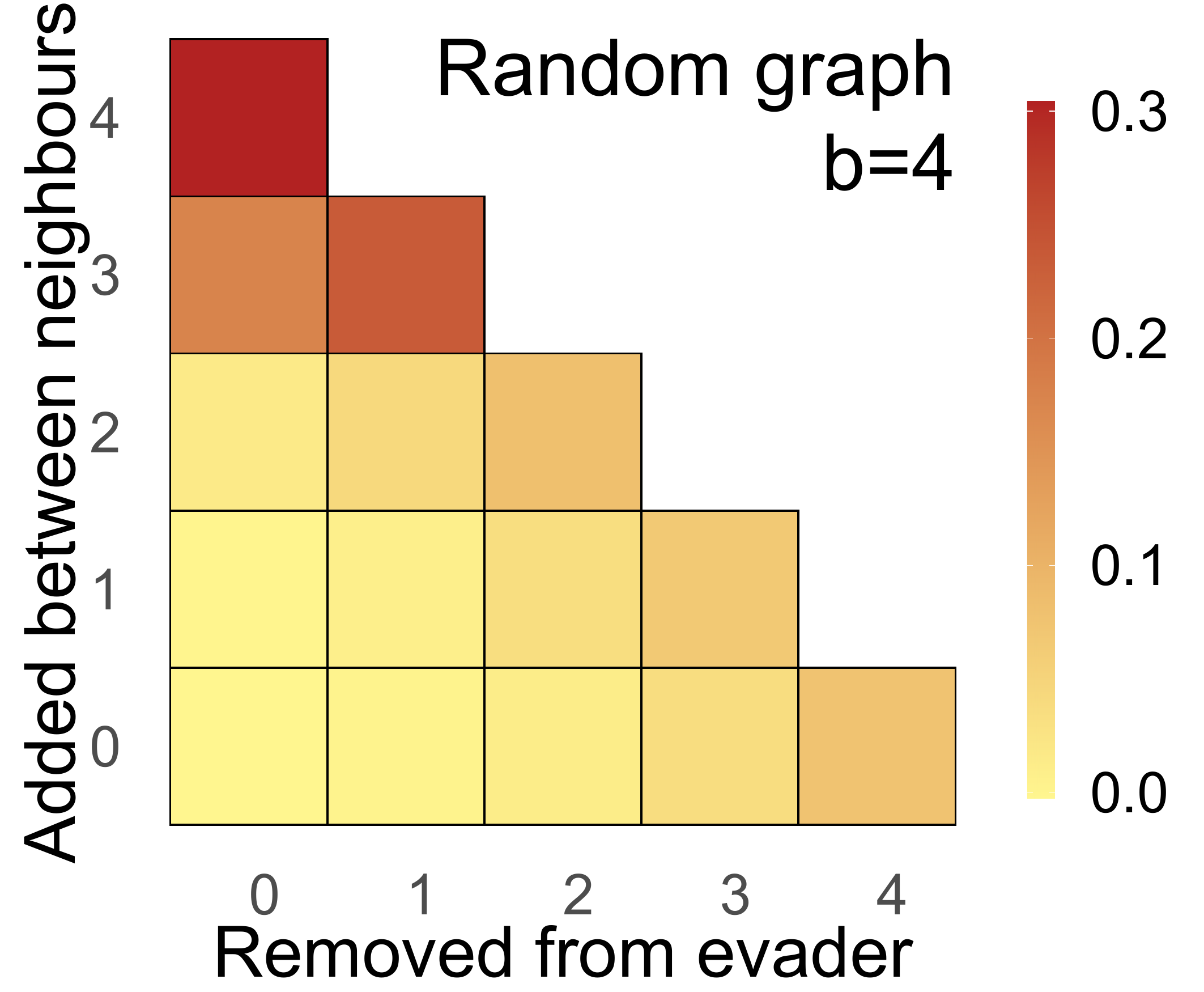}
\includegraphics[width=.26\linewidth]{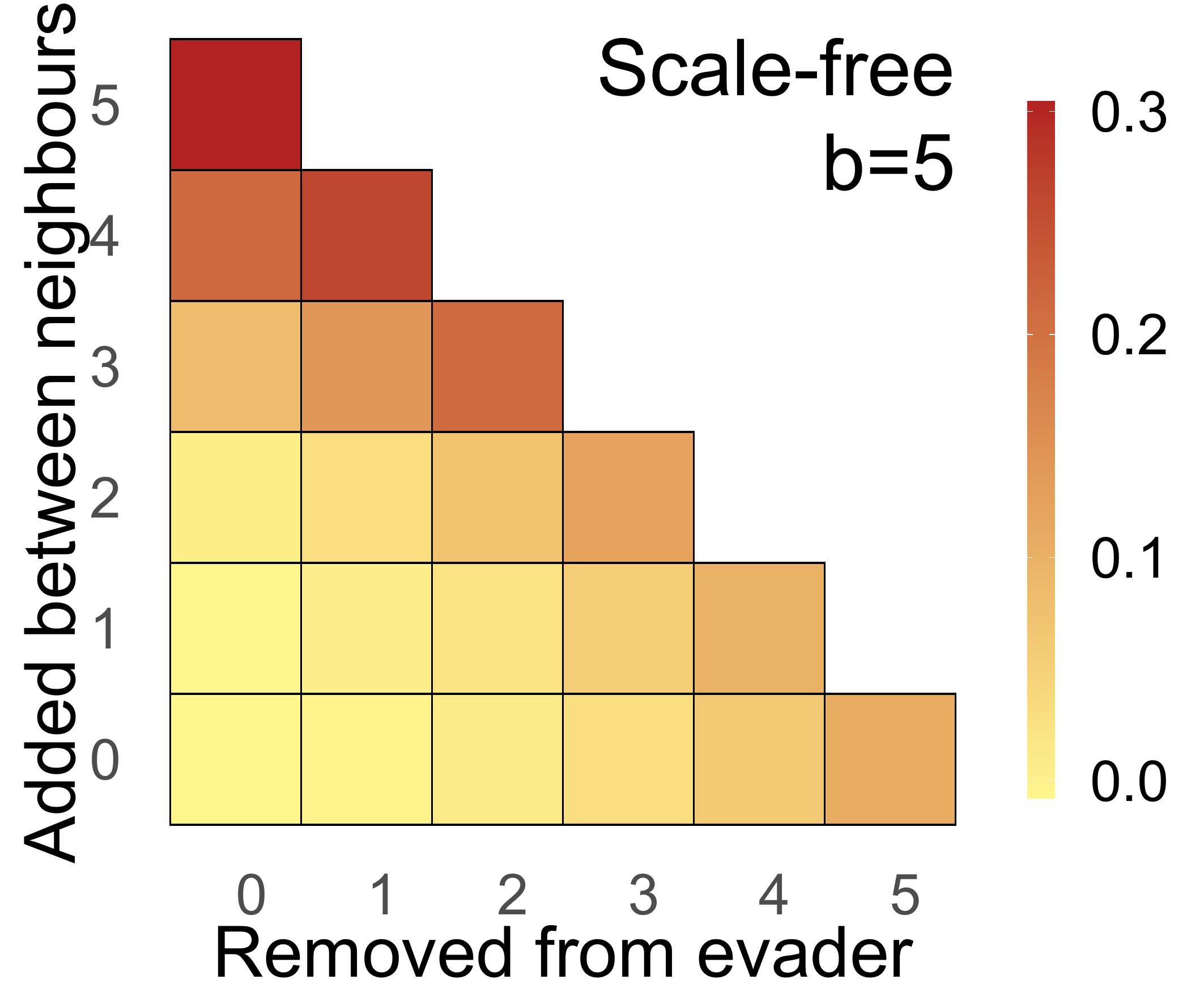}\hfill
\includegraphics[width=.26\linewidth]{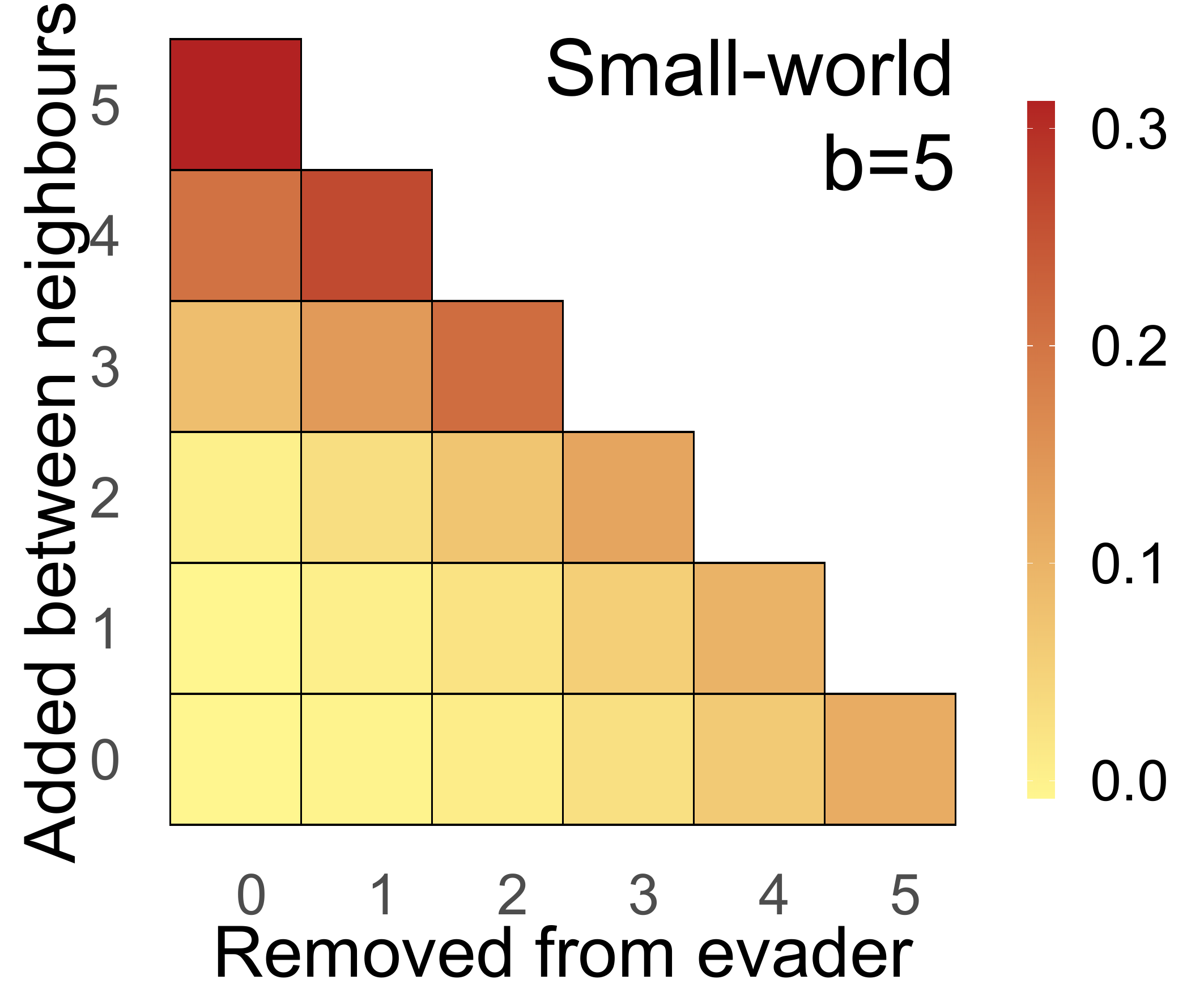}\hfill
\includegraphics[width=.26\linewidth]{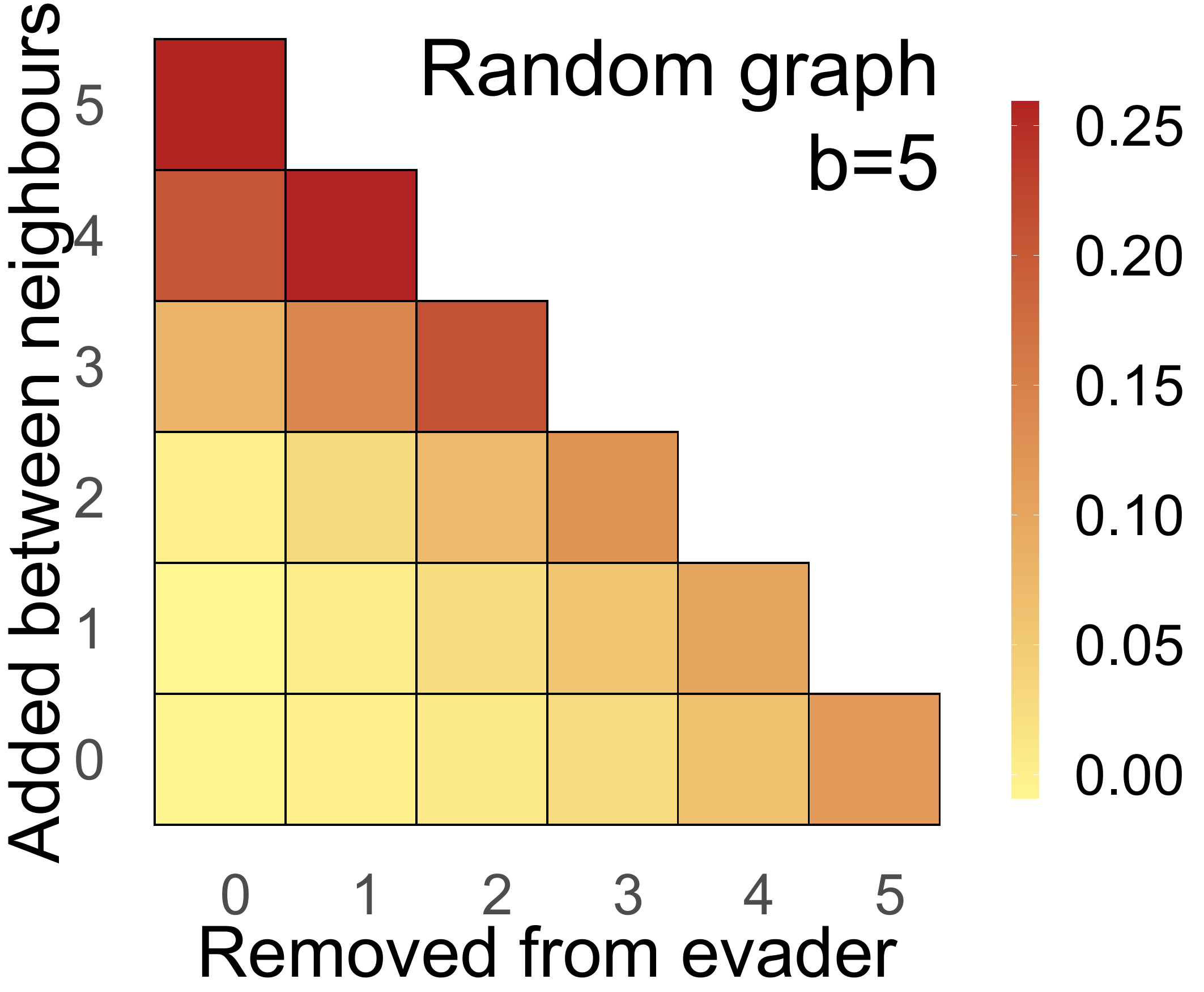}
\caption{
Same as Figure~\ref{fig:heat:WS:ER}, but for scale-free networks, small-world and random-graph networks.
}
\label{fig:heat:wtc:bali}
\end{figure}

Secondly, for any given equilibrium strategy of the evader, the difference in the seeker's payoff between her optimal strategy and other strategies is minimal (less than 1\%). This suggests that, for the zero-sum game, the seeker could, in principle, use any centrality measure to analyse the network, without compromising much efficiency. Conversely,  for any given equilibrium strategy of the seeker, the difference in the evader's payoff between her optimal strategy and other strategies is much more pronounced (more than 100\%, see Figure~\ref{fig:hists}). Hence, \textit{the outcome of the game relies heavily on the evader's choice of strategy, while the seeker's choice of centrality measure has negligible impact}.

Thirdly, the strategies that yield similar payoffs seem to involve rewiring the network in similar ways; see Figures~\ref{fig:heat:WS:ER} and~\ref{fig:heat:wtc:bali}. Interestingly, \textit{the ROAM heuristic of Waniek et al.~\cite{waniek2018hiding} is often among the evader's most rewarding strategies}.

Based on these observations, we next analyze the non-zero-sum version of the game when the evader uses the ROAM heuristic.

\subsection{The Non-Zero-Sum Version}

In this version of the game, we assume that the evader's strategies are instances of the ROAM heuristic. More specifically, the evader's total budget $b$ is used to repeatedly run ROAM. We write ROAM($x$), where $x$ is the number of added between the evader's neighbours. The budget of a single iteration is between $1$ and $\frac{b}{2}$, i.e., there are at least two iterations. The evader repeatedly run ROAM, until the entire budget $b$ is spent. For example, for $b = 10$, we have the following set of evader strategies: $\{$ROAM($1$) repeated $5$ times, ROAM($2$) repeated $3$ times + ROAM($0$), ROAM($3$) repeated twice + ROAM($1$), ROAM($4$) repeated twice$\}$.

We calculate the equilibrium strategy profiles for different networks. For each network, we consider budgets $b \in \{5,10,15,20,25,35\}$, assuming that $b$ is no more than $25\%$ of all edges in the network. This cap is meant to limit the changes in the network characteristics resulting from the evader's actions.

Figure~\ref{fig:mixed} illustrate the mixed strategies played by the seeker in the equilibrium for different networks and evader budgets. For each centrality, Tables~\ref{tab:averages_rg} and~\ref{tab:averages_reallife} present the average probability of being used in different networks.

The equilibrium strategies show, on one hand, which heuristics the evader should use to minimize her centrality while maintaining as much influence as possible. On the other hand, they indicate which centrality the seeker should adopt to have the greatest chance of identifying the evader among the top nodes in the network. Our first key observation in the non-zero-sum game setting is that the choice of the strategy by the seeker has a much greater impact on her payoff than in the zero-sum game. Hence, in what follows, we will focus particularly on the strategies of the seeker, i.e., we will consider which centrality a network analyzer should use when facing a strategic evader.

Regarding the results for the randomly-generated networks, we observe clear, robust patterns, suggesting that it is possible to identify some combination(s) of centrality measures that can be used against the evader. In particular:
\begin{itemize}
\item \emph{Scale-free networks:} degree centrality is used almost exclusively. Due to the power-law distribution of nodes' degrees in scale-free networks, the ``hubs'' have extremely high degree, and the evader is most certainly one of them. As such, even with a large budget, any attempts to reduce the evader's position in the degree-based ranking have limited impact.
\item \emph{Small-world networks:} eigenvector centrality consistently proves to be most difficult to manipulate, it is played by the seeker in almost every small-world network.
\item \emph{Random graph networks:} For low values of the evader's budget, eigenvector centrality is the most effective. However, for larger budgets, it is often replaced by closeness centrality. This shift occurs when budget reaches about $15$, regardless of the network size.
\end{itemize}

\begin{figure*}[th]
\centering
\begin{subfigure}[b]{.32\linewidth}
    \center
 	\includegraphics[width=\linewidth]{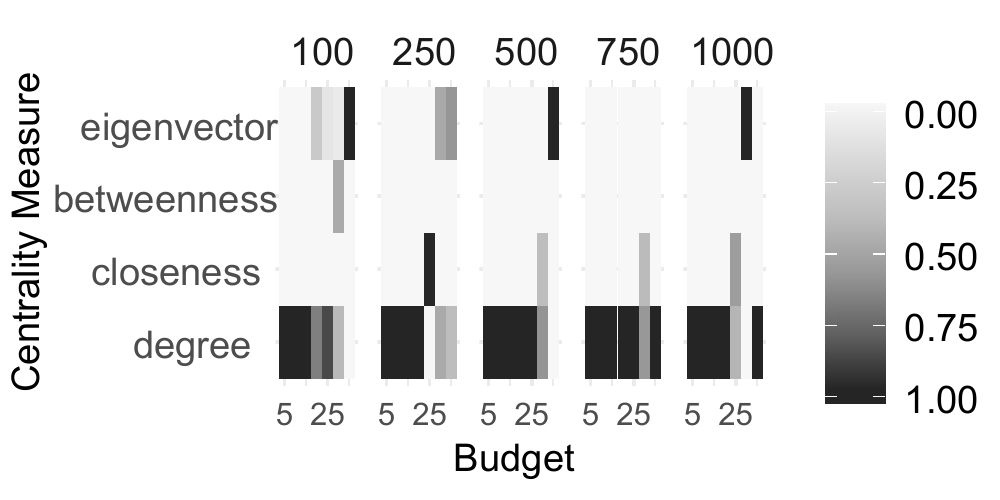}
    \caption{Scale-free}
\end{subfigure}
\begin{subfigure}[b]{.32\linewidth}
    \center
    \includegraphics[width=\linewidth]{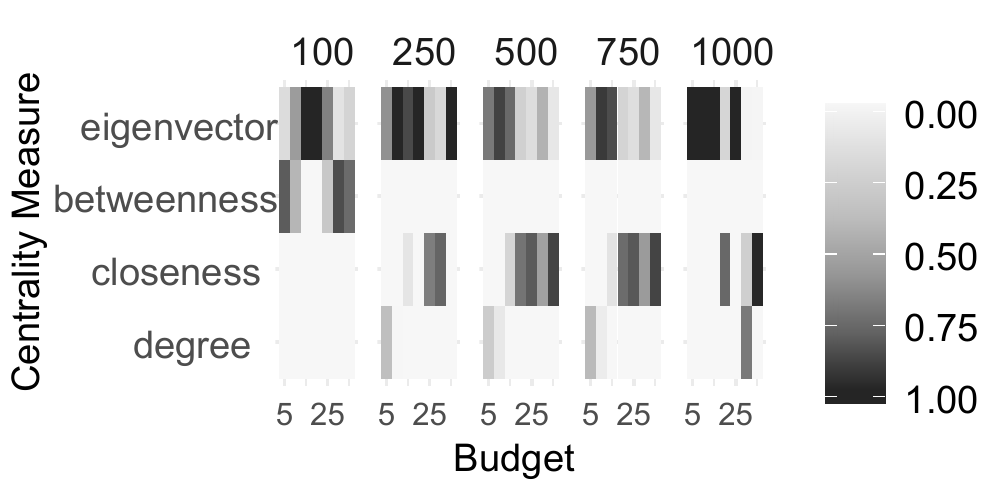}
    \caption{Random graphs}
\end{subfigure}
\begin{subfigure}[b]{.32\linewidth}
    \center
    \includegraphics[width=\linewidth]{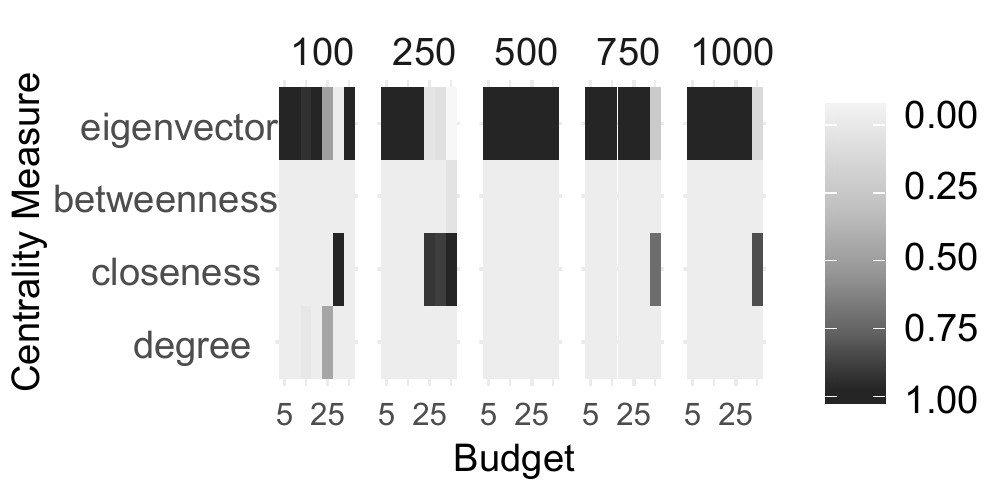}
    \caption{Small-world}
\end{subfigure}
\begin{subfigure}[b]{.32\linewidth}
    \center
    \includegraphics[width=\linewidth]{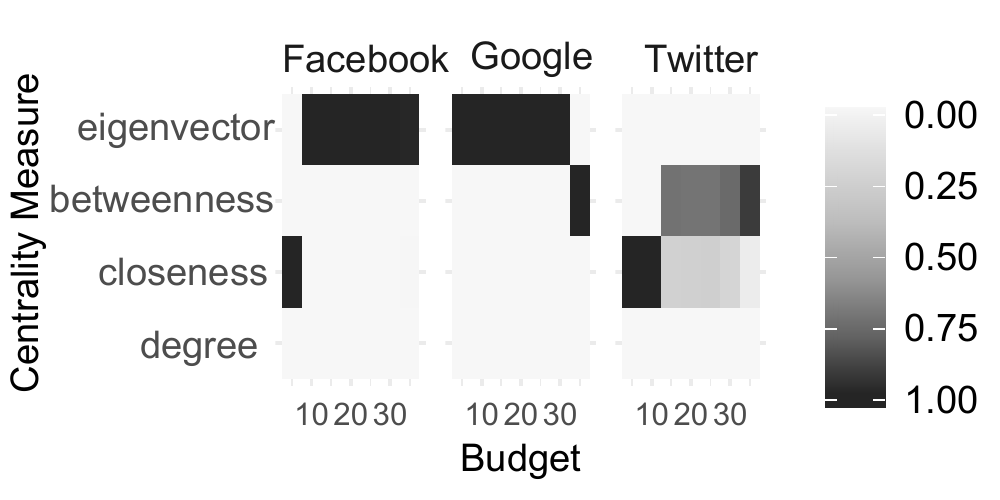}
    \caption{Social media}
\end{subfigure}
\begin{subfigure}[b]{.32\linewidth}
    \center
    \includegraphics[width=\linewidth]{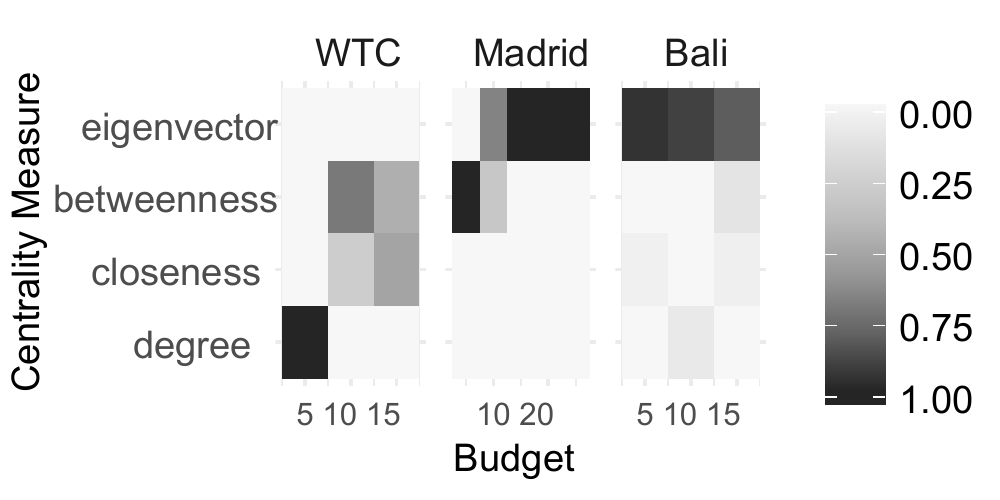}
    \caption{Terrorist networks}
\end{subfigure}
\caption{
The seeker's equilibrium strategies given the evader types $\{0.2,0.4,0.6,0.8\}$, in (a) \emph{scale-free}, (b) \emph{random graph} and (c) \emph{small-world} networks with $100$, $250$, $500$, $750$ and $1000$ nodes, as well as in (d) social media and (e) terrorist networks. Results are presented for $d=15$, and the independent cascade influence model. A darker color indicates that the corresponding centrality measure has a greater weight in the seeker's mixed strategy.
}
\label{fig:mixed}
\end{figure*}

\begin{table}[t]
\centering
\begin{tabular}{lcccc}  
\toprule
Network & $c_{betw}$ & $c_{clos}$ & $c_{degr}$ & $c_{eig}$ \\
\midrule
\emph{Scale-free} & 0 & 0.04 & 0.94 & 0.04 \\
\emph{Random graphs} & 0.05 & 0.08 & 0.25 & 0.62 \\
\emph{Small-world} & 0 & 0 & 0.06 & 0.94 \\
\bottomrule
\end{tabular}
\caption{The average probability of using each centrality given randomly-generated networks.}
\label{tab:averages_rg}
\end{table}

\begin{table}[t]
\centering
\begin{tabular}{lcccc}  
\toprule
Network & $c_{betw}$ & $c_{clos}$ & $c_{degr}$ & $c_{eig}$ \\
\midrule
WTC & 0.04 & 0.03 & 0.03 & 0.89 \\
Bali & 0.39 & 0.27 & 0.33 & 0 \\
Madrid & 0.27 & 0 & 0 & 0.73 \\
\midrule
Overall Terrorist & 0.23 & 0.10 & 0.12 & 0.54 \\
\midrule
Facebook & 0 & 0.14 & 0 & 0.86 \\
Google+ & 0 & 0.14 & 0 & 0.86 \\
Twitter & 0 & 0.56 & 0.44 & 0 \\
\midrule
Overall Social & 0 & 0.28 & 0.15 & 0.57 \\
\bottomrule
\end{tabular}
\caption{The average probability of using each centrality given different real-life networks.}
\label{tab:averages_reallife}
\end{table}

Regarding the results for the real-life networks, we also find regularities. Overall, for the networks with lower average clustering coefficient and lower density (Madrid and WTC attacks, Facebook, Google+), eigenvector centrality seems to be played most often. Furthermore, degree centrality is never played against the evader in larger networks. In more detail:

\begin{itemize}
\item \emph{Covert organizations}: for the WTC 9/11 attack and the Madrid train attack networks, eigenvector centrality is played almost exclusively. On the other hand, for the Bali attack network, degree and betweenness centralities are chosen. This last network, in addition to being the smallest, consists of two subnetworks connected by one node---Samudra---the leader of the terrorist organization. This atypical topology of the network may be responsible for the difference. Moreover, the average clustering coefficient and the density for the Bali network are much greater than for the other networks.
\item \emph{Social media}: eigenvector centrality is the most frequent choice for Facebook and Google+ networks, but for the Twitter network it is replaced by closeness and betweenness. This could be due to the former networks having a lower density and average clustering coefficient than the last one, making them more similar to small-world networks.
\end{itemize}

The above analysis of equilibrium strategies, both for real-life and randomly-generated networks, allows us to derive a number of policy recommendations:

\begin{itemize}
\item Eigenvector centrality should be used by the seeker in networks exhibiting small-world properties. This finding is supported by the results for both randomly generated small-world networks and real-life social media networks.
\item Degree centrality should be used by the seeker in scale-free networks, as evident by the results for Barabasi-Albert networks. However, since those networks exhibit some small-world properties, eigenvector centrality can be considered as a second choice.
\item For networks that resemble random graphs, eigenvector centrality proves to be useful, at least against evaders whose budget is small. As for larger budgets, closeness centrality yields superior results.
\item For two of the three terrorist networks under consideration, eigenvector centrality dominates the alternatives, highlighting its potential benefits when facing covert networks. 
\end{itemize}

In general, eigenvector centrality seems to be a reliable choice for a variety of network types. Although for some networks it is the second best choice, generally it outperforms other measures, and seems to be more resilient against strategic manipulation.

\section{Conclusions}

We investigated the problem of concealing the importance of an individual in a social network, where both the evader, i.e., the person who wishes to hide, and the seeker, i.e., the party analyzing the network, act strategically. We focused on settings where the evader cannot rewire edges between complete strangers, but instead can only modify connections involving her neighbours in the networks. We showed that even in this simplified setting, the problem of finding an optimal way to hide from the most fundamental centrality measures is NP-complete. In light of these hardness results, we analyzed a number of instances of the game under both the zero-sum and the non-zero-sum payoffs; this highlighted some potential policy implications for network analyzers in the face of a strategic evader.

For future work, we intend to study this setting more rigorously, e.g., by analyzing the case in which multiple evaders are acting simultaneously, and more broadly, e.g., by considering a wider range of centrality measures available to the seeker. Another interesting follow-up of this study is to analyze the problem of hiding from link-prediction algorithms under the assumption that both the evader and the seeker act strategically.

\section*{Acknowledgments}

Tomasz Michalak was supported by the Polish National Science Centre (grant 2016/23/B/ST6/03599).
Yevgeniy Vorobeychik was supported by the National Science Foundation (IIS-1903207, IIS-1905558) and Army Research Office (MURI W911NF1810208).
Kai Zhou was supported by PolyU (UGC) Internal Fund (1-BE3U).
For an earlier version of this work, Marcin Waniek was supported by the Polish National Science Centre (grant 2015/17/N/ST6/03686).

\clearpage

\bibliographystyle{abbrv} 
\bibliography{bibliography-strategic-evasion}

\end{document}